\theoremstyle{plain}
\newcommand{\argmin}{\mathop{\rm argmin}}
\newcommand{\E}{{\mathbb{E}}}
\newcommand{\Var}{\mathrm{Var}}
\newtheorem{lemma}{Lemma}
\theoremstyle{definition}
\newtheorem{definition}{Definition}
\title{Marginally constrained nonparametric Bayesian inference through Gaussian processes
}
\author{
  Bingjing Tang \\
  Department of Statistics \\
  Purdue University \\
  \texttt{tang272@purdue.edu} \\
   \And
  Vinayak Rao \\
  Department of Statistics \\
  Purdue University \\
  \texttt{varao@purdue.edu} \\
}
\begin{document}
\maketitle

\begin{abstract}
Nonparametric Bayesian models are used routinely as flexible and powerful models of complex data. Many times, a statistician may have additional informative beliefs about data distribution of interest, e.g., its mean or subset components, that is not part of, or even compatible with, the nonparametric prior. An important challenge is then to incorporate this partial prior belief into  nonparametric Bayesian models. 
In this paper, we are motivated by settings where practitioners have 
additional distributional information about a subset of
{the coordinates of the observations} being modeled.
Our approach links this problem to that of conditional density modeling.
Our main idea is a novel constrained Bayesian model, based on a perturbation of a parametric distribution with a transformed Gaussian process prior on the perturbation function.
We also develop a corresponding posterior sampling method based on data augmentation. We illustrate the 
{efficacy of our proposed constrained nonparametric Bayesian model in a variety of  real-world scenarios including modeling environmental and earthquake data.}
\end{abstract}

\keywords{Gaussian process \and Marginal distribution constraint \and Nonparametric Bayesian}

\section{Introduction\label{sec:1}}
Nonparametric Bayesian methods like the Dirichlet process~\citep{ferguson} and the Gaussian process~\citep{rasmussen} allow practitioners to specify flexible  priors over infinite-dimensional objects like functions and probability densities.
These have seen wide success in applied disciplines like biostatistics~\citep{dunson}, document modeling~\citep{teh} and image  modeling~\citep{sudderth}, with an accompanying rich literature on theoretical properties and computational strategies.
The flexibility of these models 
however often comes at the cost of interpretibility, and they  can be quite challenging to elicit from applied scientists. 
Even more problematic is that such flexible priors imply statistical properties of the objects of interest that are incompatible with expert knowledge. This expert knowledge is often quantified by probability distributions over functionals of the infinite-dimensional objects of interest. 
%
The challenge is then to specify nonparametric Bayesian priors subject to constraints on the distribution of such functionals.
This problem can be quite general~\citep{kessler}, and in this work, we focus on a specific setting.
Consider observations lying in a $d$-dimensional Euclidean space $\Re^d$, with the observations drawn i.i.d.\ from an unknown probability density. 
We model this density, an infinite-dimensional object, with a nonparametric Bayesian prior, but now wish to constrain the marginal distribution of a subset of the coordinates of the observations. 

{\Cref{fig:marginal constraint graphic interpretation} presents a graphical illustration of this problem. The two-dimensional contour plot represents a complex probability density $p(X_1,X_2)$, drawn from a nonparametric prior. The one-dimensional densities to the top and right, plotted with broken lines, show the corresponding marginal densities of $X_1$ and $X_2$ respectively.
If the first component $X_1$ is known to follow some known density (e.g. a Gaussian, shown with the continuous curve), then it is important to modify the prior to satisfy this, while still remaining flexible about the rest of the density.}
\begin{figure}
  \centerline{
  \includegraphics[width=0.8\textwidth]{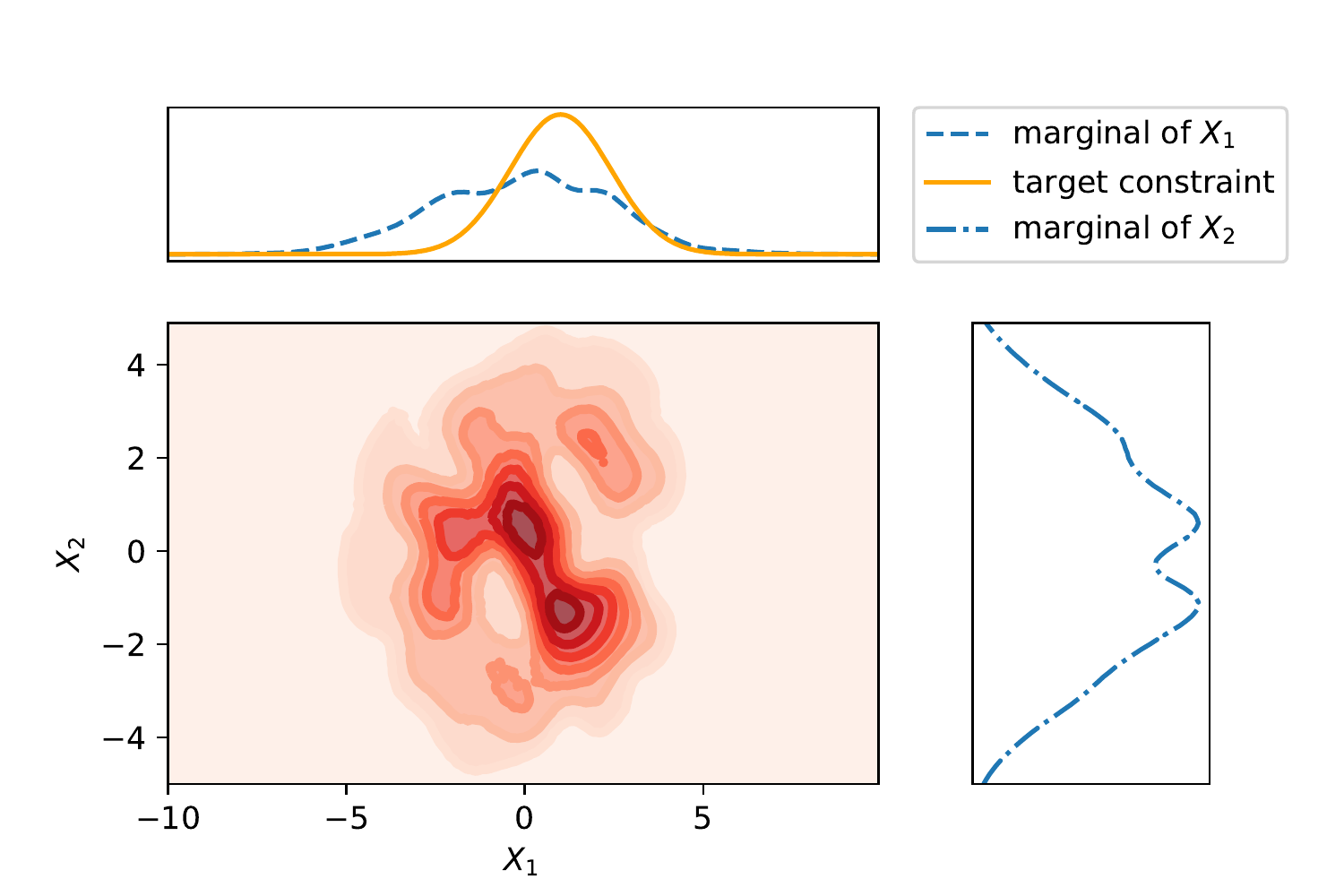}

  }
  \caption{Contour plot of a two-dimensional density $p(X_1, X_2)$ drawn from a nonparametric prior. Right is the corresponding marginal density of $X_2$, while top is the corresponding marginal density of $X_1$, along with a target constraint $p(X_1)$. \label{fig:marginal constraint graphic interpretation}}
\end{figure}

As a motivating example, following~\citet{schifeling}, we consider the 2012 American Community Survey Public Use Microdata Sample of North Carolina survey data (ACS PUM).  
This dataset, obtained from the United States Census Bureau's website (\url{https://data.census.gov/mdat/#/}), comprises features like gender, age, and educational attainment. 
The marginal distribution of age in the population may already be known empirically from external census data sources and, due to sampling effects, may differ from the empirical distribution of age in the survey data. 
Modeling this dataset with an off-the-shelf nonparametric model will imply a marginal distribution over age that also differs from the prior knowledge.
Incorporating the marginal distribution of age into the off-the-shelf prior can result in more accurate inferences, as predictive datasets from the model will satisfy 
the marginal distribution of age group in the population while {preserving} the dependence structure in the original multivariate data. 

More broadly, such an approach is also useful in a variety of modern tasks in statistics and machine learning.
A topic of increasing interest in machine learning is the problem of distribution shift, when the target distribution does not align with the underlying distribution of the training data. 
\citet{dai2022marginal} proposed an approach for such a setting, but for discrete set modeling 
(see~\cref{sec:related}).
Similarly, concerns about fairness and privacy might make it preferable to introduce simplifying constraints on certain variables, rather than modeling the observed data as accurately as possible.
\section{Related work} \label{sec:related}
An important step towards incorporating marginal constraints into nonparametric models, motivated by the ACS PUM survey dataset mentioned earlier, is the work of~\citet{schifeling}. 
Here, the authors used a Dirichlet process mixture of products of multinomials to model a dataset of discrete values.
To enforce marginal constraints on a subset of the coordinates, the authors proposed a {\em hypothetical records augmentation method}, which essentially amounts simulating values of these coordinates from the specified marginal distribution.
These simulated values serve as an auxiliary dataset, which when combined with the original dataset, guide results towards respecting the marginal constraint.
This data-augmentation approach, while conceptually simple, has a number of limitations. 
First, it only approximately enforces the marginal constraint, with the constraint enforced exactly only in the limit as the number of augmenting datapoints tends to infinity.
The authors do not provide any clear guidelines about choosing the size of the augmenting dataset.
Since the unconstrained coordinates are missing on the augmenting dataset, posterior inference involves additional complexity, and the scheme in~\citet{schifeling} is tailored only to categorial data.
Finally, to simulate the augmenting dataset, the marginal density must be known exactly. 
In many settings, one only wishes to constrain the parametric family the marginal distribution belongs to, with its parameters themselves learned from the data.
For instance, prior knowledge might suggest that interarrival times in a queue follows an exponential distribution, with the actual arrival rate unknown.
In such a setting, 
one cannot directly apply the methodology from~\citet{schifeling}.

Our proposed approach links the problem of enforcing marginal constraints to the problem of nonparametric conditional density modeling~\citep{dunson2008kernel,chung2009nonparametric,pati2013posterior,ghosh2010bayesian,tokdar2011dimension}.
Rather than indirectly induce a marginal constraint on a subset of variables using an auxiliary dataset, we propose to directly model that subset using the specified marginal distribution, placing a prior on any unknown parameters.
We then place a nonparametric prior on the conditional density of the remaining variables, with the resulting joint distribution 1) satisfying the marginal constraint {\em by construction}, and 2) inheriting the large support and flexibility of the nonparamtric prior.
Our contribution in this paper can thus also be viewed as a novel model for nonparametric conditional density modeling, with an associated novel MCMC sampling algorithm.
As we outline below, there already exist a number of approaches to nonparametric conditional density modeling in the literature. 
We note that when the marginal constraint is known exactly, our problem reduces to a conditional density modeling problem, and any of these existing methods can be used.
When only the parametric form of the marginal constraint is known, then our model affords a little more flexibility by allowing the conditional distribution to vary with the marginal.
As an additional benefit, by building on the work of~\citet{adams}, our model has an associated {\em exact} MCMC sampling with no asymptotic bias, something that is lacking for most existing conditional density models.

There is a rich and growing literature on nonparametric conditional density modeling, with a large number of approaches based on {predictor-dependent} stick-breaking process priors. {To estimate the conditional density of a response variable $Y$, these use the mixture specification 
$p(Y|X)=\int g(Y|X, \psi) dG_X(\psi)$, with known parametric densities $g(Y|X,\psi)$.
The random probability measure $G_X$ is a predictor-dependent mixture
distribution taking the form 
$$G_X=\sum_{h=1}^\infty \pi_h(X)\delta_{\psi_h},\; \psi_h\sim G_0,$$ where $G_0$ is a basis measure and $\pi_h (X)$ is a predictor-dependent probability weight constructed from a predictor-dependent stick-breaking process$$\pi_h(X)=V_h(X)\prod_{s<h}\{1-V_s(X)\}$$ with $V_h$
confined to the unit interval for all $X$. Different construction choices of $V_h$ have been proposed, among others works,  in~\citet{dunson2008kernel} and \citet{chung2009nonparametric}.}
While well understood theoretically\citep{pati2013posterior}, computation with these models typically requires truncating the number of mixture components to some finite number, or implementing involved slice-sampling algorithms.
An alternative approach based on logistic Gaussian process is introduced in~\citet{ghosh2010bayesian,tokdar2011dimension}.
The authors extend logistic Gaussian process priors originally  introduced and
studied by~\citet{lenk1988logistic,lenk1991towards,leonard1978density} to model a nonparametric conditional density. Again however, exact posterior computation with logistic Gaussian process priors is a difficult problem as indicated in~\citet{ghosh2010bayesian, tokdar2007towards}. To avoid this issue, we extend the sigmoid Gaussian process prior introduced in~\citet{adams} for modeling a nonparametric conditional density, and show how one can carry out exact posterior inference when this class of models is extended to modeling conditional distributions.  

In~\citet{kessler}, the authors considered a version of the broader problem stated at the start of this paper: given a prior $\Pi_0(\theta)$ on some joint space $\Theta$, and a marginal distribution $p_1$ on some functional $f$ of $\theta$, what is the probability measure `closest' to $\Pi_0$ that satisfies the marginal constraint.
Using the Kullback-Leibler divergence as the measure of closeness, and writing 
$\mathcal{H}$ for the set of probability measures on $\Theta$ with $f-$marginal distribution $p_1$, the authors arrived at the following solution: 
\begin{align}
\Pi_1(\theta) &:= \argmin_{\Pi \in \mathcal{H}}\text{KL}(\Pi(\theta)|\Pi_0(\theta)) =\Pi_0(\theta|f)p_1(f). \label{eq:kessler}
\end{align}
 The form of this solution allowed the authors to carry out posterior sampling via a small modification to an MCMC algorithm for the unconstrained prior $\Pi_0(\theta)$.
 While conceptually simple, this correction involves computing an intractable marginal, and in practice, this algorithm must be run as an approximate one, requiring a kernel density estimate of this marginal probability in order to calculate the acceptance probability.
We can attempt to cast our problem as a special instance of this general scheme, where $\Theta$ is the space of densities on $\Re^d$ and $f$ projects $\theta$ onto a density on a subset of these coordinates.
Note though that with a known marginal constraint, the projected variable is distributed as a Dirac delta function.
As a consequence of this hard constraint, 
the earlier MCMC scheme will make proposals attempting to hit a submanifold of measure 0, resulting in an acceptance probability of $0$.
Our proposed solution nevertheless builds on the form given in~\cref{eq:kessler}, and links the problem of marginal constraints to that of nonparametric conditional density modeling.

{As mentioned at the end of~\cref{sec:related}, \citet{dai2022marginal} consider a similar problem is the machine learning setting of marginal distribution shift. Their approach is restricted to discrete sets, and different from our problem, the authors try to learn a generative model to approximate the distribution of a random discrete {\em set}, 
given information about element marginals, i.e., the occurrence frequency of particular elements. 
Their goal is to efficiently adapt a previously learned generative model without constraints to respect the element marginals, 
without having to retrain the entire generative model from scratch.}

\section{Problem statement and proposed approach}

In the following, we denote random variables with uppercase letters and their values with lowercase letters.
We seek to model a dataset $\mathbf{X} = (X^1,\dotsc,X^n)$ comprising $n$ observations, each of dimension $d$, with the $i$th observation written as $X^i = (X^i_1,\dotsc, X^i_d)$.
As this notation indicates, we use superscripts to index individual observations in the dataset, and subscripts to index covariates.
For an increasing, ordered subset $A \subset \{1,\dotsc,d\}$, we will use $X_A$ to refer the subvector $(X_j)_{j \in A}$ of $X$,  and write $X_{A^c}=(X_j)_{j \notin A}$ for the complement of the components $X_A$.
We model the observations as independent and identical draws from some unknown probability density $p$, subject to the following constraint: the subset $X_A$ follows a known marginal distribution $p_A$.
This marginal constraint on $X_A$, obtained from some external source, can take the form of a completely specified distribution (e.g.\ $p_A$ might be the standard normal distribution $N(0,1)$).
More generally, domain knowledge about the distribution of $X_A$ might take the form of some parametric distribution with the parameter value unknown, for example, $X_A$ might be marginally distributed as a Gaussian with unknown mean and variance.
We refer to the former as a {\em distribution constraint}, and the latter as a {\em family constraint}.
The latter arises in fields like queuing theory or genetics, where structural knowledge like memorylessness or independence can result in distributions like the Poisson, Gaussian or exponential.
In the first real example presented in the~\cref{sec:3}, the modeler knows that  concentrations of pollutants follow a lognormal distribution, with the mean and variance unknown. 
Note that the hypothetical records approach of~\citet{schifeling} no longer works, since know we no longer have a specific distribution to generate auxiliary samples.
Taking a nonparametric Bayesian approach, the problem now is to place a nonparametric Bayesian prior on the unknown density $p$, while ensuring that the induced marginal distribution of $X_A$ is consistent with this side knowledge.  
For the family constraint, one must also place a prior on the unknown parameters.

\subsection{ Proposed nonparametric Bayesian Model}
\label{section:newmodel}
Our modeling approach, which can also be viewed as a contribution to the literature on nonparametric conditional density modeling, proceeds as follows.
Inspired by~\cref{eq:kessler}, we factor the joint probability as $p(X_A,X_{A^c}) = p_A(X_A)p(X_{A^c}|X_A)$. 
We then directly model the subset $X_A$ of variables according to their specified marginal distribution, 
and then place a flexible nonparametric prior on the conditional density of the remaining variables given a realization of $X_A$.
Specifically, with $p_A(\;\cdot\;|\;\phi)$ the known marginal constraint (parametrized by $\phi$), we model $X_A$ as
\begin{align}
  X_A \sim p_A(\;\cdot\;|\;\phi), \qquad \phi \sim p_\phi(\cdot).
\end{align}
When the prior $p_\phi$ on $\phi$ is a Dirac delta (that is, $\phi$ is known), we are in the distributional constraint setting.
For the family constraint setting, the prior $p_\phi$ on $\phi$ might either reflect domain knowledge about $\phi$, or can be a weakly informative or uninformative distribution.

We now have to specify a prior on the conditional distribution of $X_{A^c}$ given $X_A$.
As outlined earlier, there exist a number of approaches in the literature to do this, though posterior inference for these typically involve discretization or truncation approximations.
After specifying our prior below, we will show in the next section how it is possible to carry out {\em exact} MCMC inference for this model.

We next introduce some more notation. Let $\pi_0(X_{A^c}|X_A, \theta, \phi)$ be a simple conditional distribution that is strictly positive, continuous, and parameterized by $\theta$ and $\phi$. 
This will serve as a {\em centering distribution}.
We will model the true conditional distribution of $X_{A^c}$ given $X_A$ as a perturbation of this centering distribution, with a transformed Gaussian process (GP)~\citep{rasmussen} prior on the perturbation function.
Specifically, let $\sigma(x) = 1/(1+\exp(-x))$ be the sigmoid function, and let $\lambda(\cdot)$ be a realization of a Gaussian process with mean function $\mu(\cdot)$ and covariance kernel $ k(\cdot,\cdot)$. 
Then, we set the conditional distribution $p(X_{A^c}|X_A,\theta,\phi) \propto \pi_0(X_{A^c}|X_A, \theta,\phi)\sigma(\lambda(X_A,X_{A^c})) $. 
Recall that $\lambda$ being a Gaussian process implies that for every finite subset of random vectors $z_1, \dots, z_m \in \Re^d$,  the corresponding vector $\lambda(z_1),\dotsc,\lambda(z_m)$ follows a Gaussian distribution as below, see~\citet{rasmussen} for more details:
 \begin{align*}
    \begin{pmatrix}
\lambda(z_1) \\
\vdots\\
\lambda(z_m) \\
\end{pmatrix}\sim \mathcal{N}\left(\begin{pmatrix}
\mu(z_1) \\
\vdots\\
\mu(z_m) \\
\end{pmatrix},\begin{pmatrix}
k(z_1, z_1) & \dots & k(z_1, z_m) \\
&\ddots&\\
k(z_m, z_1) &\dots & k(z_m, z_m) \\
\end{pmatrix}\right). 
 \end{align*}
The Gaussian property above shows that sample paths of a GP can take negative values, and the sigmoid transformation serves as a {\em link function} to keep $p(X_{A^c}|X_A,\theta,\phi)$ positive.
There are a number of other choices for the link function in the literature, the  most popular being the exponential function in~{\citet{tokdar2007towards}}. 
However, we show that along these lines of~\citet{adams}, the use of the sigmoid function allows us to sample exactly from this model.
Following~\citet{rao}, we can carry out exact MCMC inference without any approximation error.
First, we write down the overall model below:
 \begin{align}
 \theta &\sim p_\theta(\cdot), \quad \phi \sim p_\phi(\cdot), \quad \lambda(\cdot) \sim \mathcal{GP}(\mu(\cdot), k(\cdot, \cdot)) \label{eq:eqn9}\\
 X_A &\sim p_A(\;\cdot\;|\;\phi)\label{eq:eqn10}\\
 X_{A^c}|X_A & \sim \frac{\pi_0(X_{A^c}|X_A,\theta, \phi)\sigma(\lambda(X_A, X_{A^c}))}{\int \pi_0(X_{A^c}|X_A,\theta,\phi)\sigma(\lambda(X_A,X_{A^c}))     dX_{A^c}}.
 \label{eq:eqn11}
 \end{align}
The Gaussian process is a well studied nonparametric prior on functions, and can be shown to possess desirable large support properties~\citep{choudhuri2007nonparametric}.
By modeling the modulation function with a GP prior, one can expect the resulting conditional density also to inherit similar large support properties. We show this below.

\subsection{Large support and consistency properties}
Assume  that $\mathcal{X}_A\times \mathcal{X}_{A^c}$ is a compact subset of  {$\Re^d$}.
For simplicity, we assume that the parameters $\theta$ and $\phi$ are fixed and known, and drop them from all notation, writing the centering distribution and the marginal constraining distribution simply as $\pi_0(\cdot|\cdot)$ and $p_A(\cdot)$. 
Write $\mathcal{F}$ for the space of all joint densities on $\mathcal{X}_A\times \mathcal{X}_{A^c}$ with {their corresponding} marginal densities on $\mathcal{X}_A$ equal to $p_A(\cdot)$, and conditional densities jointly continuous on $\mathcal{X}_A\times \mathcal{X}_{A^c}$. 
The Gaussian process prior on $\lambda$ induces a prior $\Pi$ on $\mathcal{F}$ through the map {from $\lambda$ to $f_\lambda$}
\begin{align}
&\lambda\rightarrow f_\lambda(X_A, X_{A^c})=p_A(X_A)\cdot\cfrac{\pi_0(X_{A^c}| X_A)\sigma(\lambda(X_A,X_{A^c}))}{\int_{\mathcal{X}_{A^c}} \pi_0(X_{A^c}| X_A)\sigma(\lambda(X_A,X_{A^c}))dX_{A^c}}.\label{eq:map}
\end{align}
In~\cref{thm:largesupport}, we prove a `large-support' property of the conditional density in~\cref{eq:eqn11}, showing that any density function in $\mathcal{F}$ is in the KL support of the induced prior $\Pi$. 
First, we state a key intermediate result from~\citet{ghosal2006posterior} that we will use to prove the theorem:
\begin{restatable}{lemma}{xyz}{[Theorem 4 in \citet{ghosal2006posterior}]}
  Assume that $\lambda(X_A,X_{A^c})$ on the compact index set $\mathcal{X}_A \times \mathcal{X}_{A^c}$ is a Gaussian process (GP) with continuous sample paths. 
Assume that the GP mean function and a function  $\lambda_0(X_A, X_{A^c})$  on $\mathcal{X}_A \times \mathcal{X}_{A^c}$ belong to the RKHS of the covariance kernel of the Gaussian process. Then
 \begin{align*}
     P(\lambda:\sup_{\{X_A\in \mathcal{X}_A,\,    X_{A^c} \in \mathcal{X}_{A^c}\}}|\lambda(X_A, X_{A^c})-\lambda_0(X_A,X_{A^c})|<\delta)>0\;\; \forall \delta>0.
 \end{align*}\label{lm:gaussian process property1}
\end{restatable}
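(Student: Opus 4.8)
The plan is to recognize the statement as the standard support theorem for Gaussian measures on a separable Banach space, and to reduce the asserted probability to a small-ball probability by a Cameron--Martin change of measure. First I would set the stage. Since the index set $\mathcal{X}_A\times\mathcal{X}_{A^c}$ is compact and the sample paths are continuous, I regard $\lambda$ as a Gaussian random element of the separable Banach space $C(\mathcal{X}_A\times\mathcal{X}_{A^c})$ equipped with the supremum norm $\|\cdot\|_\infty$. Writing $W=\lambda-\mu$, the centered process $W$ induces a centered Gaussian measure $\mu_W$ whose Cameron--Martin (reproducing kernel) space is exactly the RKHS $\mathbb{H}$ of the covariance kernel $k$. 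Because $\mathbb{H}$ is a linear space and, by hypothesis, both the mean $\mu$ and the target $\lambda_0$ belong to it, the shift $h:=\lambda_0-\mu$ lies in $\mathbb{H}$. The event in the statement then becomes $\{\|W-h\|_\infty<\delta\}$, i.e.\ the event that $W$ lies in the open sup-norm ball $B_\delta(h)$ of radius $\delta$ centered at $h$.

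The key step is to transfer the ball from center $h$ to center $0$. Using the shift-of-measure identity, $\mu_W\bigl(B_\delta(h)\bigr)$ equals the mass that the law of $W-h$ assigns to the centered ball $B_\delta(0)$. Since $-h$ is a Cameron--Martin element of $\mu_W$, translation by it produces a measure that is mutually absolutely continuous with respect to $\mu_W$, with a strictly positive Radon--Nikodym density $\rho$ of the usual exponential Gaussian form $\rho=\exp\!\bigl(\widehat h-\tfrac12\|h\|_{\mathbb{H}}^2\bigr)$, where $\widehat h$ is the (almost surely finite, centered Gaussian) Paley--Wiener functional attached to $h$. This yields the reduction
\begin{align*}
P\bigl(\|W-h\|_\infty<\delta\bigr)=\mu_W\bigl(B_\delta(h)\bigr)=\int_{B_\delta(0)}\rho\,d\mu_W ,
\end{align*}
so that, because $\rho>0$ holds $\mu_W$-almost everywhere, it suffices to prove the small-ball positivity $\mu_W\bigl(B_\delta(0)\bigr)>0$.

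The remaining ingredient is exactly that small-ball positivity $P(\|W\|_\infty<\delta)>0$ for every $\delta>0$, which is the only genuinely nontrivial analytic point and hence the main obstacle. It is a standard property of centered Gaussian measures: the topological support of $\mu_W$ is the $\|\cdot\|_\infty$-closure of its Cameron--Martin space $\mathbb{H}$, which contains the origin, so every open ball around $0$ carries positive mass; alternatively one argues that continuity of paths on the compact index set makes $\|W\|_\infty$ almost surely finite and invokes the general Gaussian anti-concentration bounds. Combining the three steps gives $P(\|\lambda-\lambda_0\|_\infty<\delta)=\mu_W\bigl(B_\delta(h)\bigr)>0$, as claimed. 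In practice this lemma is simply the cited Theorem~4 of~\citet{ghosal2006posterior}, so I would ultimately defer the delicate support/small-ball estimates to that reference, with the RKHS linearity and the measure-shift bookkeeping supplied as above.
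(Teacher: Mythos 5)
Your proposal is correct, but note that the paper does not actually prove this lemma at all: it is imported verbatim as Theorem~4 of \citet{ghosal2006posterior} and used as a black box, so there is no in-paper proof to compare against. What you have written is essentially the standard proof of that cited theorem, and it is sound: view $\lambda$ as a Gaussian element of $C(\mathcal{X}_A\times\mathcal{X}_{A^c})$, center it, use linearity of the RKHS to place $h=\lambda_0-\mu$ in the Cameron--Martin space, apply the Cameron--Martin translation theorem to reduce $\mu_W(B_\delta(h))>0$ to the small-ball statement $\mu_W(B_\delta(0))>0$, and conclude via the Gaussian support theorem (the support of $\mu_W$ is the $\|\cdot\|_\infty$-closure of the Cameron--Martin space, which contains $0$). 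The one point where your sketch glosses over a genuine technicality is the identification of the Cameron--Martin space of the path-space measure with the RKHS of the covariance kernel; this holds here because the index set is compact and the paths are continuous, so the RKHS embeds continuously into $C(\mathcal{X}_A\times\mathcal{X}_{A^c})$, but it deserves a sentence or a reference (e.g.\ to van der Vaart and van Zanten's treatment of Gaussian priors) rather than being taken for granted. Your closing remark that one would ultimately defer to the cited reference matches exactly what the paper does; the added value of your argument is that it makes explicit \emph{why} the hypotheses (compactness, continuous paths, mean and $\lambda_0$ in the RKHS) are each needed, whereas the paper's citation leaves that opaque.
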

We can then state our result, which we prove in Appendix A~\ref{appendix A}.
\begin{restatable}{theorem}{abc}
Suppose the Gaussian process 
on the compact space $\mathcal{X}_A\times\mathcal{X}_{A^c}$  satisfies the assumptions in~\cref{lm:gaussian process property1}. 
Assume that its mean function is continuous and that the RKHS associated with its covariance kernel equals the set of all continuous functions on $\mathcal{X}_A\times\mathcal{X}_{A^c}$. 
Also assume $\pi_0(X_{A^c}|X_A)$ and $p_A(X_A)$ in the map of equation~\ref{eq:map} are strictly positive on $\mathcal{X}_A\times\mathcal{X}_{A^c}$ and $\mathcal{X}_A$, and additionally $\pi_0(X_{A^c}|X_A)$ is continuous on $\mathcal{X}_A\times\mathcal{X}_{A^c}$. Then any  density function belonging to $\mathcal{F}$ is in the KL support of  $\Pi$.\label{thm:largesupport}
\end{restatable}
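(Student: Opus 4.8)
The plan is to reduce the statement to the sup-norm support property of the Gaussian process supplied by \cref{lm:gaussian process property1}. Fix a target density $f^\ast \in \mathcal{F}$ and write $f^\ast(X_A, X_{A^c}) = p_A(X_A)\, q^\ast(X_{A^c}\mid X_A)$, where $q^\ast$ is the jointly continuous, strictly positive conditional density. Because both $f^\ast$ and any $f_\lambda$ share the same $X_A$-marginal $p_A$, the Kullback--Leibler divergence collapses to an average of conditional divergences,
\[
  \text{KL}(f^\ast, f_\lambda) = \int_{\mathcal{X}_A} p_A(X_A) \int_{\mathcal{X}_{A^c}} q^\ast \log\frac{q^\ast}{q_\lambda}\, dX_{A^c}\, dX_A,
\]
with $q_\lambda(X_{A^c}\mid X_A) = \pi_0\,\sigma(\lambda)/Z_\lambda(X_A)$ and $Z_\lambda(X_A)=\int_{\mathcal{X}_{A^c}} \pi_0\,\sigma(\lambda)\,dX_{A^c}$. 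So it suffices to exhibit a continuous $\lambda_0$ with $q_{\lambda_0}=q^\ast$ and then to control this divergence by $\|\lambda-\lambda_0\|_\infty$.

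The key construction is to produce such a $\lambda_0$ despite the bounded range $(0,1)$ of $\sigma$. Set $g = q^\ast/\pi_0$, which by compactness, continuity, and strict positivity is bounded above and below by finite positive constants. Choosing a constant $c$ with $0 < c < 1/\sup g$ guarantees $c\,g \in (0,1)$ everywhere, so $\lambda_0 := \sigma^{-1}(c\,g) = \log\frac{c g}{1 - c g}$ is well defined and continuous on $\mathcal{X}_A\times\mathcal{X}_{A^c}$. A direct substitution shows the scaling $c$ cancels in the normalization: $q_{\lambda_0} = \pi_0\,(c g)\big/\!\int \pi_0\,(c g)\,dX_{A^c} = c\,q^\ast \big/ c\!\int q^\ast\,dX_{A^c} = q^\ast$, hence $f_{\lambda_0}=f^\ast$.

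Finally I would translate sup-norm closeness into KL smallness and invoke the lemma. Since $\frac{d}{dx}\log\sigma(x) = 1-\sigma(x)\in(0,1)$, the map $\log\sigma$ is $1$-Lipschitz, so $\|\lambda-\lambda_0\|_\infty<\delta$ forces $|\log\sigma(\lambda)-\log\sigma(\lambda_0)|<\delta$ pointwise; this in turn pins the normalizer ratio to $Z_\lambda/Z_{\lambda_0}\in(e^{-\delta},e^{\delta})$, so $|\log(q_{\lambda_0}/q_\lambda)| < 2\delta$ uniformly and therefore $\text{KL}(f^\ast, f_\lambda) < 2\delta$. Because the continuous $\lambda_0$ lies in the RKHS (which by hypothesis is all of $C(\mathcal{X}_A\times\mathcal{X}_{A^c})$) and the mean function is continuous, \cref{lm:gaussian process property1} gives $P(\|\lambda-\lambda_0\|_\infty<\delta)>0$ for every $\delta$. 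Taking $\delta=\epsilon/2$ yields $\Pi(\{f_\lambda : \text{KL}(f^\ast,f_\lambda)<\epsilon\}) \geq P(\|\lambda-\lambda_0\|_\infty<\epsilon/2)>0$, which is the claim. I expect the main obstacle to be the first construction step: the sigmoid cannot reproduce an arbitrary conditional density directly, and the point is that the $X_A$-dependent rescaling needed to fit $c g$ into $(0,1)$ is harmless because it is absorbed by the conditional normalization; the remaining estimate is a routine Lipschitz continuity argument.
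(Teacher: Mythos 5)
Your overall strategy is the same as the paper's: build an explicit continuous $\lambda_0$ with $f_{\lambda_0}$ equal to (a version of) the target by inverting the sigmoid on a rescaled density ratio, observe that the rescaling constant is absorbed by the conditional normalization, invoke \cref{lm:gaussian process property1} for positive prior mass of the sup-norm ball around $\lambda_0$, and convert sup-norm closeness of $\lambda$ into a $2\delta$ bound on $\|\log(f_{\lambda_0}/f_\lambda)\|_\infty$ (your Lipschitz argument via $\tfrac{d}{dx}\log\sigma(x)=1-\sigma(x)\in(0,1)$ is a clean equivalent of the paper's \cref{lm:corresponding theorem 4.1}, which uses $\sigma(x-\delta)>e^{-\delta}\sigma(x)$).

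There is, however, one genuine gap: you assume the target conditional density $q^\ast$ is strictly positive, but the definition of $\mathcal{F}$ in the paper only requires the conditional densities to be \emph{jointly continuous}, not bounded away from zero. If $q^\ast$ vanishes at some point, then $g=q^\ast/\pi_0$ vanishes there, $c\,g\notin(0,1)$, and $\lambda_0=\sigma^{-1}(c\,g)$ is $-\infty$ at that point, so it is not a continuous function and cannot lie in the RKHS; your construction breaks down exactly at the step you identified as the crux. The paper closes this by first perturbing to $q_0=(q+\xi)/(1+\xi)$, which is continuous and bounded below by $\xi/(1+\xi)>0$, running your construction with $q_0$ in place of $q^\ast$ (so $f_{\lambda_0}=p_A\,q_0$, not $p_A\,q$), and then splitting
\begin{align*}
\mathrm{KL}(p_A q, f_\lambda) \;=\; \int p_A\, q \,\log\frac{q}{q+\xi} \;+\; \log(1+\xi) \;+\; \int p_A\, q\, \log\frac{p_A q_0}{f_\lambda},
\end{align*}
where the first term is nonpositive, the second is below $\epsilon/2$ for $\xi$ small, and the third is controlled by $\|\log(f_{\lambda_0}/f_\lambda)\|_\infty<\epsilon/2$ via the sup-norm lemma with $\delta=\epsilon/4$. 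With that modification your argument becomes the paper's proof; as written, it only establishes the theorem for the strictly positive members of $\mathcal{F}$.
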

We remark that for many covariance kernels, the associated RKHS equals the set of all continuous functions, see theorems 4.3-4.5 of \citet{tokdar2007posterior}. 
If a kernel on a one-dimensional space can be written as $k(s,t)=\psi(s-t)$ for some nonzero, continuous density function $\psi$, then \citet{tokdar2007posterior} showed its RKHS is the set of continuous functions.
For spaces with dimension larger than 1, if the covariance kernel is the Kronecker product of one-dimensional kernels, with each one-dimensional kernel having the set of all continuous functions as its RKHS, then 
\citet{tokdar2007posterior} showed that the RKHS of the covariance kernel also equals the set of all continuous functions.  

We finish this section with a well-known result from~\citet{schwartz1965bayes}, showing that the large support property of the prior distribution from~\cref{thm:largesupport} translates to asymptotic consistency of the posterior distribution.
Specifically, assume the product of $p_A(X_A)$ and the true conditional density $q(X_{A^c}|X_A)$ lies in $\mathcal{F}$, then the posterior $\Pi(f_\lambda|(X_A^1,X_{A^c}^1),\dots,(X_A^n,X_{A^c}^n))$ will be weakly consistent at the joint density $p_A(X_A)\cdot q(X_{A^c}|X_A)$ as $n\rightarrow \infty$.
\begin{restatable}{theorem}{ef}
(\citet{schwartz1965bayes})
If the true density $f_0$ is in the Kullback-Leibler support of $\Pi$, then the posterior is weakly consistent at $f_0$. \label{thm:weak_cons}
\end{restatable}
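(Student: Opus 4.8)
The plan is to prove the stated consistency result by the classical Schwartz argument, which controls the posterior odds of a neighborhood complement. Since the weak topology on densities is generated by subbasic sets of the form $U = \{f : |\int \psi\, f - \int \psi\, f_0| < \epsilon\}$ for bounded continuous $\psi$ and $\epsilon > 0$, it suffices to show that $\Pi(U^c \mid X^1,\dots,X^n) \to 0$ almost surely under $f_0^\infty$ for every such $U$. Writing $R_n(f) = \prod_{i=1}^n f(X^i)/f_0(X^i)$ for the likelihood ratio against the truth, I would express the posterior probability as
\begin{align*}
\Pi(U^c \mid X^1,\dots,X^n) = \frac{\int_{U^c} R_n(f)\, d\Pi(f)}{\int R_n(f)\, d\Pi(f)},
\end{align*}
and then bound the numerator from above and the denominator from below on an event of probability one. (In our setting the hypothesis $f_0$ in the KL support of $\Pi$ is supplied by \cref{thm:largesupport} for any $f_0 \in \mathcal{F}$.)

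For the denominator I would invoke the Kullback--Leibler support hypothesis: for fixed $\epsilon > 0$ the set $K_\epsilon = \{f : \text{KL}(f_0 \| f) < \epsilon\}$ has $\Pi(K_\epsilon) > 0$. Restricting the integral to $K_\epsilon$, normalizing $\Pi$ there, and applying Jensen's inequality to the concave logarithm gives
\begin{align*}
\frac{1}{n}\log \int R_n\, d\Pi \;\geq\; \frac{\log \Pi(K_\epsilon)}{n} + \frac1n\sum_{i=1}^n \int_{K_\epsilon} \log\frac{f(X^i)}{f_0(X^i)}\, \frac{d\Pi(f)}{\Pi(K_\epsilon)}.
\end{align*}
The summands $Z_i := \int_{K_\epsilon}\log\{f(X^i)/f_0(X^i)\}\,d\Pi(f)/\Pi(K_\epsilon)$ are i.i.d.\ functions of the data with mean $\int_{K_\epsilon}(-\text{KL}(f_0\|f))\,d\Pi/\Pi(K_\epsilon) \geq -\epsilon$ by Fubini, so the strong law of large numbers yields $\liminf_n \frac1n\log\int R_n\,d\Pi \geq -\epsilon$ almost surely. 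Since $\epsilon$ is arbitrary, for any $\beta > 0$ the denominator exceeds $e^{-n\beta}$ for all large $n$, almost surely.

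For the numerator I would build a uniformly exponentially consistent test. For the half-space $\{f : \int \psi f - \int \psi f_0 \geq \epsilon\}$, the test $\Phi_n = \mathbbm{1}\{\frac1n\sum_i \psi(X^i) - \int \psi f_0 > \epsilon/2\}$ satisfies, by Hoeffding's inequality for the bounded statistic $\psi$, both $\E_{f_0}[\Phi_n] \leq e^{-cn}$ and $\sup_f \E_f[1-\Phi_n] \leq e^{-cn}$ uniformly over that half-space for some $c>0$ depending only on $\epsilon$ and $\|\psi\|_\infty$; taking the maximum of the two one-sided tests covers $U^c$. A change of measure gives $\E_{f_0}\big[(1-\Phi_n)\int_{U^c} R_n\, d\Pi\big] = \int_{U^c}\E_f[1-\Phi_n]\, d\Pi(f) \leq e^{-cn}$, so by Markov's inequality and Borel--Cantelli the quantity $(1-\Phi_n)\int_{U^c} R_n\, d\Pi$ falls below $e^{-cn/2}$ eventually, while $\E_{f_0}[\Phi_n] \le e^{-cn}$ forces $\Phi_n \to 0$ almost surely. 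Combining through the decomposition $\Pi(U^c \mid X^{1:n}) \leq \Phi_n + (1-\Phi_n)\,\frac{\int_{U^c} R_n\, d\Pi}{\int R_n\, d\Pi}$ and choosing $\beta < c/2$ in the denominator bound drives the ratio to zero exponentially fast.

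The main obstacle is the test-construction step: one must produce tests whose type I and type II error probabilities both decay exponentially and, crucially, uniformly over the non-compact complement $U^c$. For the weak topology this is delivered cleanly by bounded continuous test functions together with Hoeffding's inequality, but care is needed to patch the one-sided tests into a single test for $U^c$ and to align the exponential rate $c$ with the free constant $\beta$ from the denominator bound so the two estimates combine. A secondary technical point is the mild integrability needed for the strong law applied to the $Z_i$ in the denominator step, which is standard given $\text{KL}(f_0\|f) < \epsilon$ on $K_\epsilon$.
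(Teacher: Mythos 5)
The paper does not actually prove this statement---it is imported as a known theorem from \citet{schwartz1965bayes}, with the appendix supplying only the definitions (weak neighborhood, KL support, weak posterior consistency) needed to state it. Your proposal is a correct reconstruction of the classical Schwartz argument that the citation points to: prior positivity of KL neighborhoods lower-bounds the denominator $\int R_n\,d\Pi$ by $e^{-n\beta}$ eventually a.s.\ (via Jensen plus the strong law, or equivalently Fatou), and uniformly exponentially consistent tests built from bounded continuous test functions via Hoeffding's inequality control the numerator over $U^c$; the technical points you flag (integrability of the $Z_i$, which follows from $\E_{f_0}[\log^+(f/f_0)]\le 1$; patching the two one-sided tests; aligning $\beta<c/2$) all resolve in the standard way, so there is no gap.
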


\subsection{Exact prior simulation}
We next explain how to draw samples from this model. This step, useful in itself, is also key to our MCMC algorithm for posterior simulation.
Along the lines of~\citet{adams}, we exploit the fact that the logistic function $\sigma(x)$ satisfies $\sigma(x) \le 1$, so that 
$\pi_0(X_{A^c}|X_A; \theta,\phi) \sigma(\lambda(X_A, X_{A^c})) \le \pi_0(X_{A^c}|X_A; \theta,\phi)$. We note that this bounding property does not hold for most typical link functions used in the literature.
The bounding property allows us to generate samples from the nonparametric conditional through a simple rejection sampling scheme: first propose from $ \pi_0(X_{A^c}|X_A; \theta,\phi)$ and then accept or reject with probability $\sigma(\lambda(X_A,X_{A^c}))$.
The accepted samples form a realization from the probability density proportional to $\pi_0(X_{A^c}|X_A; \theta,\phi)\sigma(\lambda(X_A, X_{A^c}))$.
Crucially, generating a dataset of $n$ observations in such a fashion only requires evaluating the Gaussian process at a finite set of points: the locations of the data points as well as the rejected proposals that were produced along the way. 
Since the values of a GP on a finite set of {points} follow a multivariate {normal} distribution, these can easily be sequentially sampled from the corresponding conditional distributions. 
Importantly, this does not require integrating the transformed GP as in the denominator of~\cref{eq:eqn11}.
We describe this in detail in~\cref{algo:4}. 
 \begin{algorithm}
  \KwIn{the marginal density family $p_A(\cdot)$,  a tractable conditional density family $\pi_0(\cdot|\cdot)$, priors $p_\phi(\cdot)$ and $p_\theta(\cdot)$ on $\phi$ and $\theta$, the Gaussian process mean and covariance functions $\mu(\cdot)$ and $k(\cdot,\cdot)$, the sample size $n$.}
  \KwOut{Multivariate samples $(X_{A}^1, X_{A^c}^1), \dots, (X_{A}^n, X_{A^c}^n)$ with $X^i_A\sim p_A(\cdot)$.}
  Set $\lambda_{pre}$ as null and $S=\{1,2,\dots,n\}$.
  
  Sample
  $\phi\sim p_\phi(\cdot)$ and $\theta\sim p_\theta(\cdot)$.
  
  Sample $X_{A}^1,\dots, X_{A}^n \stackrel{i.i.d}{\sim} p_A(\cdot|\phi)$. 
  
 \SetAlgoLined\SetArgSty{}
 \Repeat{$S$ is null}
 {
 
 Sample $y^{i} \sim \pi_0(\cdot|X_{A}^i; \theta,\phi)$ $\forall i\in S$, and denote $T=\{(X_{A}^i, y^{i}),i \in S\}$;
 
 Sample $\lambda_T|\lambda_{pre}$ conditionally from the Gaussian process, and add $\lambda_T$ into $\lambda_{pre}$;
 
 Set $\alpha^i=1$ with probability $\sigma(\lambda(X_{A}^i, y^{i}))$ $\forall i \in S$;
 
  Set $X_{A^c}^i \gets y^{i}$   and delete index $i$ from $S$ if $\alpha^i=1 \,\forall i \in S$.
  }
  \caption{Algorithm to generate prior samples from the proposed model (Equations~\eqref{eq:eqn9} -~\eqref{eq:eqn11})}
  \label{algo:4}
\end{algorithm}

\subsubsection{Placing a prior distribution on { $\theta$ and $\phi$:} the parameters of the centering distribution  {$\pi_0(\;\cdot\;|\;\cdot\;;\theta, \phi)$}}\label{section:prior on conditional}
As the results in~\cref{thm:largesupport} and~\cref{thm:weak_cons} show, it is sufficient for the centering density $\pi_0(\;\cdot\;|\;\cdot\;;\theta, \phi)$ to be a strictly positive, {continuous} conditional distribution for our model to possess desirable asymptotic properties. 
Even so, a poor choice of $\pi_0(\;\cdot\;|\;\cdot\;;\theta, \phi)$ can impact the efficiency of~\cref{algo:3}. 
Specifically, if there exists a region of space that has high probability under the true density, but low probability under the centering distribution, then it will take a large number of rejected proposals before a sample is finally accepted.
While ultimately avoiding such issues requires a careful choice of the centering distribution, in our experiments, we found that the additional flexibility gained from allowing a location and/or scale parameter of the centering distribution to vary significantly improves performance.
Accordingly, we place {priors $p_\theta(\cdot)$ and $p_\phi(\cdot)$} on the parameters of $\pi_0(\;\cdot\;|\;\cdot\;;\theta, \phi)$.

 \section{Posterior inference}\label{section:posterior computation}
 Having specified the model, we now move to the problem of posterior computation: given a dataset $(X^1,\dotsc,X^n)$ of observations from an unknown density, if we assume the data-generating density lies in $\mathcal{F}$ and model it using our marginally constrained nonparametric prior, how do we characterize resulting the posterior distribution?
We take a Markov chain Monte Carlo (MCMC) approach, devising a Markov chain with this posterior distribution as its stationary distribution.
 The unknown latent variables of interest in the model specified above are the GP-distributed function $\lambda$, and any unknown parameters $\theta$ and $\phi$ of the centering distribution and the marginal constraining distribution respectively. 
 For simplicity, we ignore any unknown hyperparameters of the kernel of the Gaussian process; these can easily be simulated given realizations of $\lambda$.
The posterior distribution for $(\lambda,\phi,\theta)$ is
\begin{align}
  p(\lambda,\phi, \theta|X^1,\dots,X^n)
  &\propto p_\phi(\phi)\cdot p_\theta(\theta)\cdot \mathcal{GP}(\lambda)\cdot \prod_{i=1}^n p(X^i|\lambda,\phi,\theta)\nonumber\\
    &\propto p_\phi(\phi) \cdot p_\theta(\theta)\cdot \mathcal{GP}(\lambda)\cdot\prod_{i=1}^n p_A(X_A^i|\phi)\cdot\prod_{i=1}^n \frac{\pi_0(X_{A^c}^i|X_A^i,\theta, \phi)\sigma(\lambda(X_{A}^i, X_{A^c}^i))}{Z(\lambda, \theta, \phi, X_{A}^i)}
    \label{eq:eqn121}
\end{align}
where $\mathcal{GP}(\lambda)$ denotes the Gaussian process prior and $Z(\lambda, \theta,  \phi, X_{A}^i)=\int \pi_0(X_{A^c}^i|X_{A}^i,\theta, \phi)\sigma(\lambda(X_{A}^i, X_{A^c}^i)) dX_{A^c}^i$.

Note that
the dependence of the intractable denominator $Z(\lambda, \theta, \phi, X_{A}^i)$ on $\theta$, $\phi$ and $\lambda$ makes their posterior distribution an example of a {\em doubly intractable probability distribution}~\citep{murray2012mcmc}, so that standard MCMC methods cannot directly be used.
Specifically, for a Metropolis-Hastings algorithm that proposes new parameters $\theta^*$, $\phi^*$ and a new function $\lambda^*$, the acceptance probability involves the ratios $Z(\lambda, \theta, \phi, X_A^i)/Z(\lambda^*, \theta^*, \phi^*, X_A^i)$, something that is clearly impossible to calculate.
To solve this, we follow a data augmentation scheme based on an approach proposed in~\citet{rao}.

At a high level, our approach is to also instantiate the rejected proposals from $\pi_0$. Write $\mathcal{Y}^i=\{y^{i1}, \dots, y^{i\left|\mathcal{Y}^i\right|}\}$ as the set of rejected samples preceding the $i$th observation $(X^i_A,X^i_{A^c})$.
Instead of computing $p(\lambda,\phi, \theta|(X_{A}^1, X_{A^c}^1), \dots, (X_{A}^n, X_{A^c}^n))$, we simulate from $p(\lambda,\phi,\theta,\mathcal{Y}^1, \dots, \mathcal{Y}^n| (X_{A}^1, X_{A^c}^1),  \dots, (X_{A}^n, X_{A^c}^n))$.
Observe that the latter has the former as its marginal distribution, so that having produced samples from the latter, we can just discard the rejected samples $(\mathcal{Y}^1,\dotsc,\mathcal{Y}^n)$.
Importantly, we will see that given the rejected samples $(\mathcal{Y}^1,\dotsc,\mathcal{Y}^n)$, the variables $\lambda$, $\phi$ and $\theta$ can be updated using standard MCMC techniques.
Our overall approach to produce samples from $p(\lambda,\phi,\theta,\mathcal{Y}^1, \dots, \mathcal{Y}^n| (X_{A}^1, X_{A^c}^1),  \dots, (X_{A}^n, X_{A^c}^n))$ is a Gibbs sampling algorithm  described in ~\cref{algo:3}, involving three main steps:
\begin{description}
\item[Sample $\mathcal{Y}^1, \dots, \mathcal{Y}^n \,|\, \lambda,\phi, \theta, (X_{A}^1, X_{A^c}^1) \dots, (X_{A}^n, X_{A^c}^n)$:]
Note that under our model, the number of rejected samples preceding each observation is a random quantity following a geometric distribution, whose success probability equals the acceptance probability. 
Noting that the acceptance probability equals $\int \pi_0(y|X_A^i;\theta,\phi)\sigma(\lambda(y,X_A^i))dy$ and is intractable, we avoid evaluating this by directly simulating the number and values of the rejected samples preceding an observation.
To do so, we just simulate a new observation following~\cref{algo:4}, discard the accepted sample and keep the rejected samples.
The validity of this scheme was proved in~\citet{rao} who showed that crucially, these quantities are independent of the value of the accepted sample. 
\item[Sample $\lambda\, |\,\phi,\theta, \mathcal{Y}^1, \dots, \mathcal{Y}^n, (X_{A}^1, X_{A^c}^1) \dots, (X_{A}^n, X_{A^c}^n)$:]
This conditional distribution no longer involves any intractable normalizers:
    \begin{align} p(\lambda|\,\phi,\theta, \mathcal{Y}^1, \dots, \mathcal{Y}^n,  (X_{A}^1, X_{A^c}^1), \dots, (X_{A}^n, X_{A^c}^n)) \propto p(\lambda) \cdot \prod_{i=1}^n \left\{\sigma(\lambda(X_{A}^i, X_{A^c}^i))\prod_{j=1}^{|\mathcal{Y}^i|}\left[1-\sigma(\lambda(X_{A}^i, y^{ij}))\right]\right\}.\label{eq:lambda}
        \end{align}
Effectively, $\sigma(\lambda)$ serves as a classification function to separate accepted and rejected proposals. We can generate a new random function $\lambda^*$ by a standard MCMC methods for GPs, for example, elliptical slice sampling~\citep{murray2010elliptical} or Hamiltonian Monte Carlo~\citep{neal2011mcmc}.
    
\item[Sample $\theta,\phi|\,\lambda, \mathcal{Y}^1, \dots, \mathcal{Y}^n, (X_{A}^1, X_{A^c}^1) \dots, (X_{A}^n, X_{A^c}^n)$:] Given all other variables, $\theta$ and $\phi$ are dependent on each other, and follow distributions 
\begin{align}
    p(\phi\,|\,\theta,\mathcal{Y}^1, \dots, \mathcal{Y}^n,  (X_{A}^1, X_{A^c}^1), \dots, (X_{A}^n, X_{A^c}^n)) &\propto p_\phi(\phi)\cdot \prod_{i=1}^n p(X_A^i|\phi) \cdot \prod_{i=1}^n \left\{\pi_0(X_{A^c}^i|X_A^i,\theta,\phi)\prod_{j=1}^{|\mathcal{Y}^i|}\pi_0(y^{ij}|X_A^i,\theta,\phi)\right\}, \label{eq:phi} \\
p(\theta\,|\,\phi, \mathcal{Y}^1, \dots, \mathcal{Y}^n,  (X_{A}^1, X_{A^c}^1), \dots, (X_{A}^n, X_{A^c}^n)) &\propto p_\theta(\theta) \cdot \prod_{i=1}^n \left\{\pi_0(X_{A^c}^i|X_A^i,\theta, \phi)\prod_{j=1}^{|\mathcal{Y}^i|}\pi_0(y^{ij}|X_A^i,\theta,\phi)\right\}.\label{eq:theta}
\end{align}
These can be simulated using standard MCMC techniques such as Metropolis-Hastings, Hamiltonian Monte Carlo or slice sampling.
A simplifying assumption is to make the centering distribution $\pi_0$ independent of the parameter $\phi$. 
While we do not make this assumption, it does hold in the setting of distribution constraints where $\phi$ is fixed.
In such instances, it might be possible to choose a prior $p_\phi(\cdot)$ that is conjugate to the constraint family and a prior $p_\theta(\cdot)$ conjugate to the centering distribution.
Then, given the rejected samples, the posterior belongs to the same family as the prior and is typically easy to sample from.
\end{description}

\begin{algorithm}[H]
  \KwIn{The observations $\mathcal{X}=\left\{(X_{A}^1, X_{A^c}^1), \dots, (X_{A}^n, X_{A^c}^n)\right\}$, set of rejected samples $\Tilde{\mathcal{Y}}=\{(X_{A}^1, \Tilde{y}^{11}),\dots,(X_{A}^1, \Tilde{y}^{1|\Tilde{\mathcal{Y}}^1|}),\dots, (X_{A}^n,\Tilde{y}^{n1}),\dots,(X_{A}^n,\Tilde{y}^{n|\Tilde{\mathcal{Y}}^n|}) \}$, where $\Tilde{\mathcal{Y}}^1=\{\Tilde{y}^{11},\dots, \Tilde{y}^{1|\Tilde{\mathcal{Y}}^1|}\},\dots, \Tilde{\mathcal{Y}}^n=\{\Tilde{y}^{n1},\dots, \Tilde{y}^{n|\Tilde{\mathcal{Y}}^n|}\}$, the current random function values  $\lambda_{\mathcal{X}\cup \Tilde{\mathcal{Y}}}$, the current parameters $\Tilde{\phi}$ and $\Tilde{\theta}$. }
  \KwOut{ A new set of rejected samples $\mathcal{Y}=\{(X_{A}^1, y^{11}),\dots,(X_{A}^1, y^{1|\mathcal{Y}^1|}), \dots,(X_A^n,y^{n1}), \dots,(X_{A}^n,y^{n|\mathcal{Y}^n|}) \}$, where $\mathcal{Y}^1=\{y^{11},\dots, y^{1|\mathcal{Y}^1|}\}, \dots, \mathcal{Y}^n=\{y^{n1},\dots, y^{n|\mathcal{Y}^n|}\}$, and a new instantiation $\lambda_{\mathcal{X}\cup\mathcal{Y}}$ of the GP on $\mathcal{X}\cup\mathcal{Y}$, new parameters $\phi$ and $\theta$.}
 
 Set $S\gets\{1,\dots, n\}$, $\lambda_{pre}=\lambda_{\mathcal{X}\cup\mathcal{\Tilde{Y}}}$, and $\mathcal{Y}\gets\{\}$.
 
 \SetAlgoLined\SetArgSty{}
 \Repeat{$S$ is null}
 {
  Sample $y^{i}\sim \pi_0(\cdot|X_{A}^i; \Tilde{\theta}, \Tilde{\phi})$ $\forall i\in S$, and denote $T=\{(X_{A}^i,y^i), i\in S \}$;
 
 Sample $\lambda_T|\lambda_{pre}$ conditionally from the Gaussian process, and add $\lambda_T$ into $\lambda_{pre}$;
 
 Set $\alpha^i=1$ with probability $\sigma(\lambda(X_{A}^i, y^{i}))$ $\forall i \in S$;
 
 Set $\mathcal{Y}\gets\mathcal{Y}\cup(X_{A}^i,y^i)$ if $\alpha^i=0$; otherwise, delete index $i$ from $S$ $\forall i \in S$.
 }
 
 Restrict $\lambda_{pre}$ to $\lambda_{\mathcal{X}\cup\mathcal{Y}}$.
  
  Update $\lambda_{\mathcal{X}\cup\mathcal{Y}}$ with a Markov kernel having a stationary distribution as~\cref{eq:lambda}.

Update $\Tilde{\phi}$ to $\phi$ with a Markov kernel having a stationary distribution as~\cref{eq:phi}.

Update $\Tilde{\theta}$ to $\theta$ with a Markov kernel having a stationary distribution as~\cref{eq:theta}.
   \caption{An iteration of the Markov chain for posterior inference for $p(\lambda,\phi, \theta|X_{A}^1, X_{A^c}^1, \dots, X_{A}^n, X_{A^c}^n)$}
  
  \label{algo:3}
\end{algorithm}

\section{Experiments\label{sec:3}}

In this section, we present two synthetic examples and two real examples to demonstrate the usefulness of our proposed methodology. For the first synthetic example, we incorporate a specific marginal {\em distribution} constraint into the model and assume the empirical marginal distribution of observations agree with the marginal distribution constraint. The second example demonstrates the additional flexibility of our approach over standard conditional density modeling, by including an additional coupling between the variances of the constrained and unconstrained coordinates.
For both real examples, we incorporate marginal {\em family} constraints into the models.
 We implemented our algorithm in Python and ran it on Purdue Community Cluster Workbench, an interactive compute environment for non-batch big data analysis and simulation, consisting of Dell compute nodes with 24-core AMD EPYC 7401P processors (24 cores per node), and 512 GB of memory. 
 For all examples, we ran a total of 5000 MCMC iterations, and treated the first 1000 iterations as burn-in. 
 
 We compared our proposed marginally constrained model with a fully nonparametric model without any marginal constraints, specifically, the model of~\citet{adams} that ours is based upon.
 For both models, we used Gaussian processes with a squared exponential kernel: $k_{SE}(x,x')=\sigma^2\exp\left({-\cfrac{\|x-x'\|^2}{2l^2}}\right)$.
 In our experiments, we set the parameter $\sigma^2$ to $1$ and updated lengthscale parameter $l$ via Hamiltonian Monte Carlo (HMC) under a weakly informative prior. 
{In the family constraint setting, we also compared our proposed marginally constrained model with a simple parametric model that satisfies the marginal family constraint.} 
 We carried out both quantitative and qualitative evaluations of the models; for the former, we used the likelihood of a held-out test dataset. 

\subsection{Synthetic Example 1}

\begin{figure}
  \centerline{
  \includegraphics[width=0.33\textwidth]{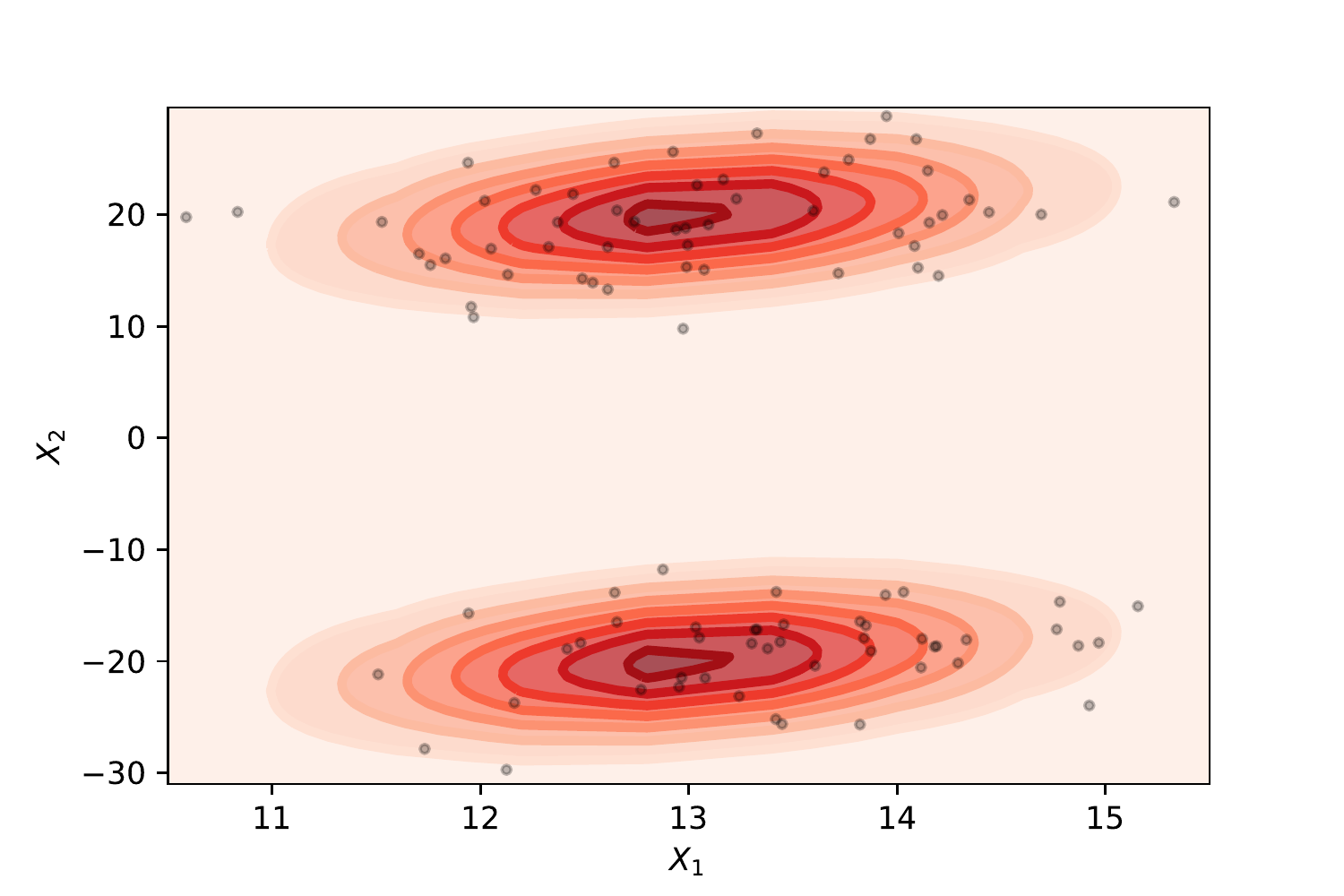}
  \includegraphics[width=0.33\textwidth]{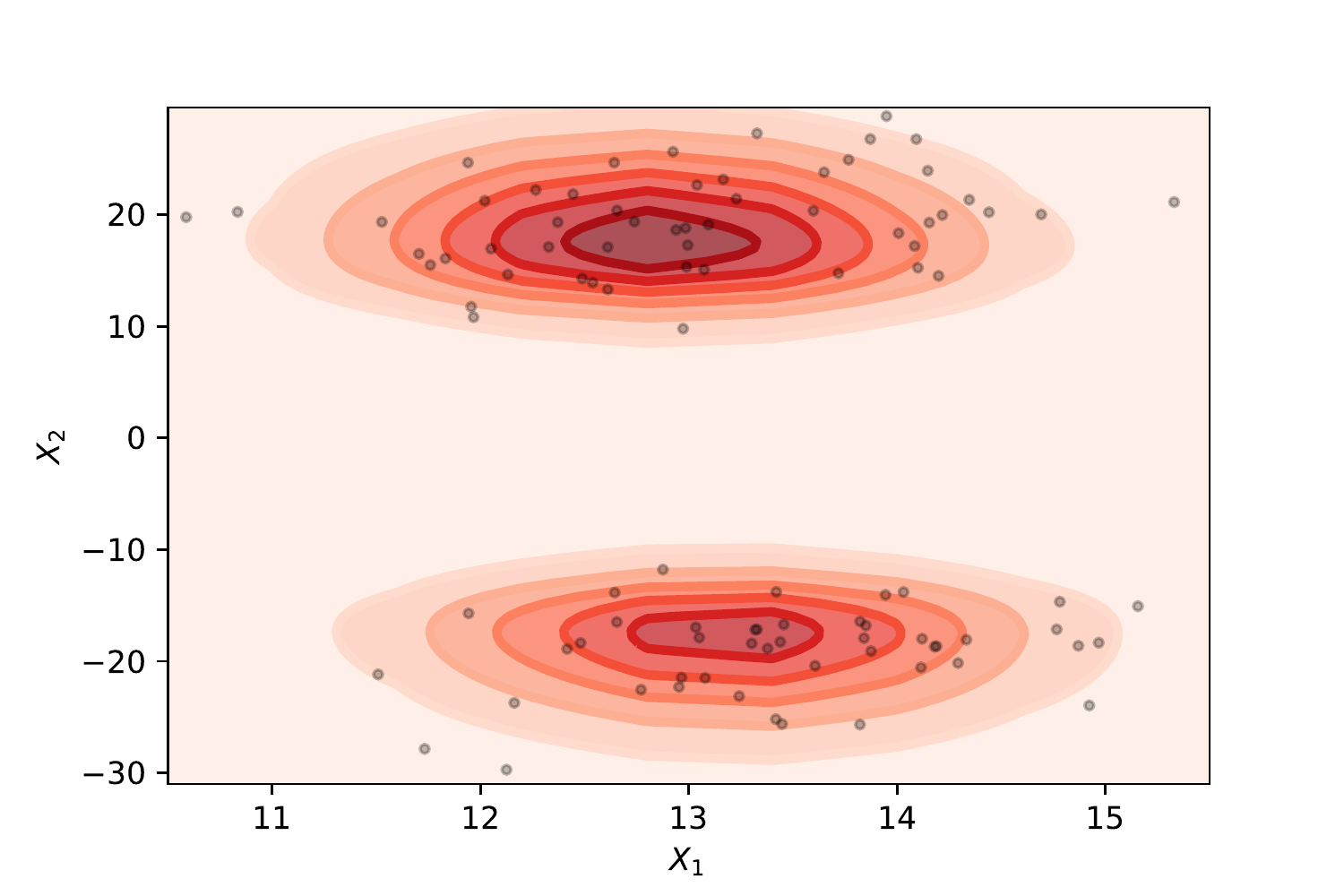}
  \includegraphics[width=0.33\textwidth]{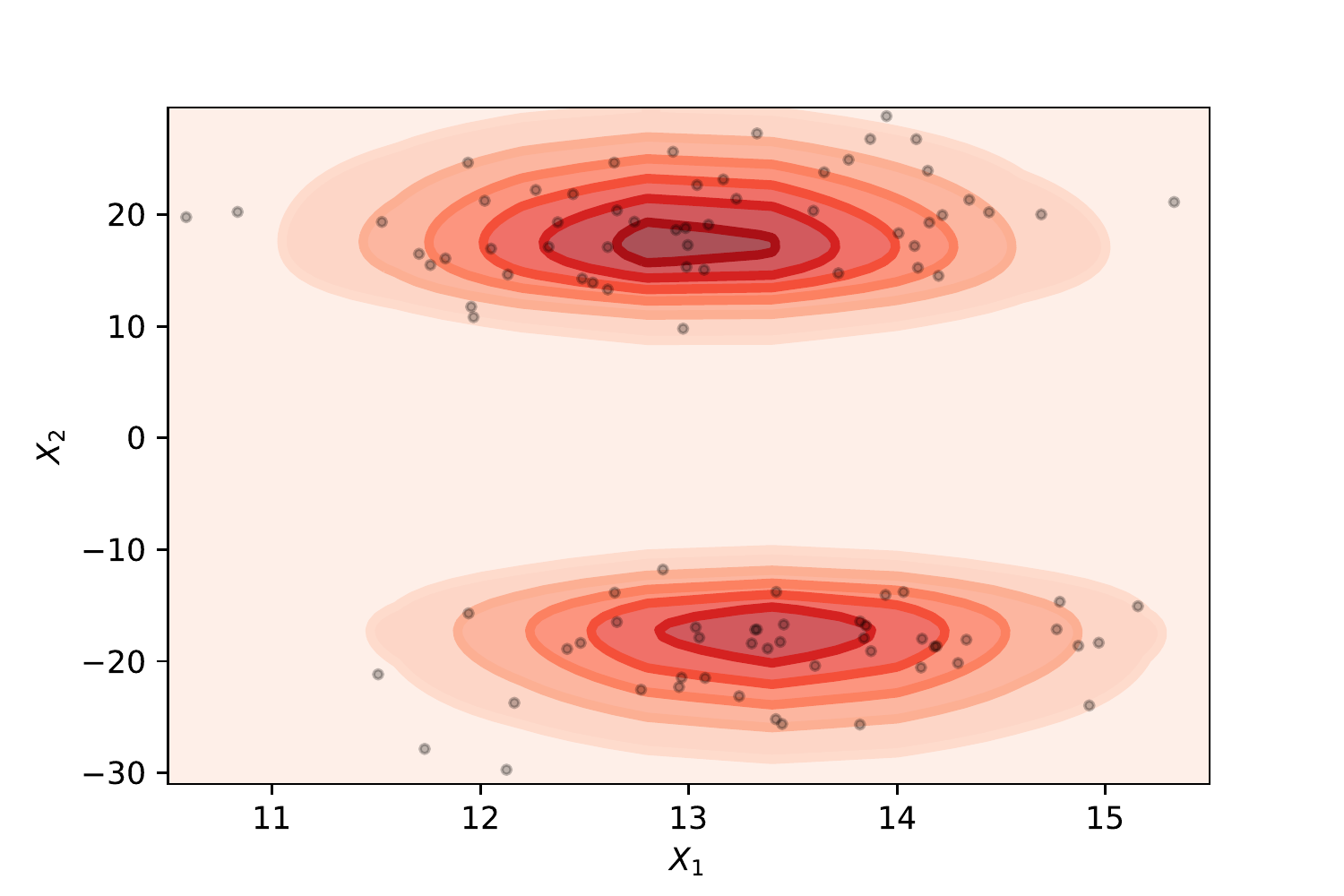}
  }
  
    \centerline{
  \includegraphics[width=0.33\textwidth]{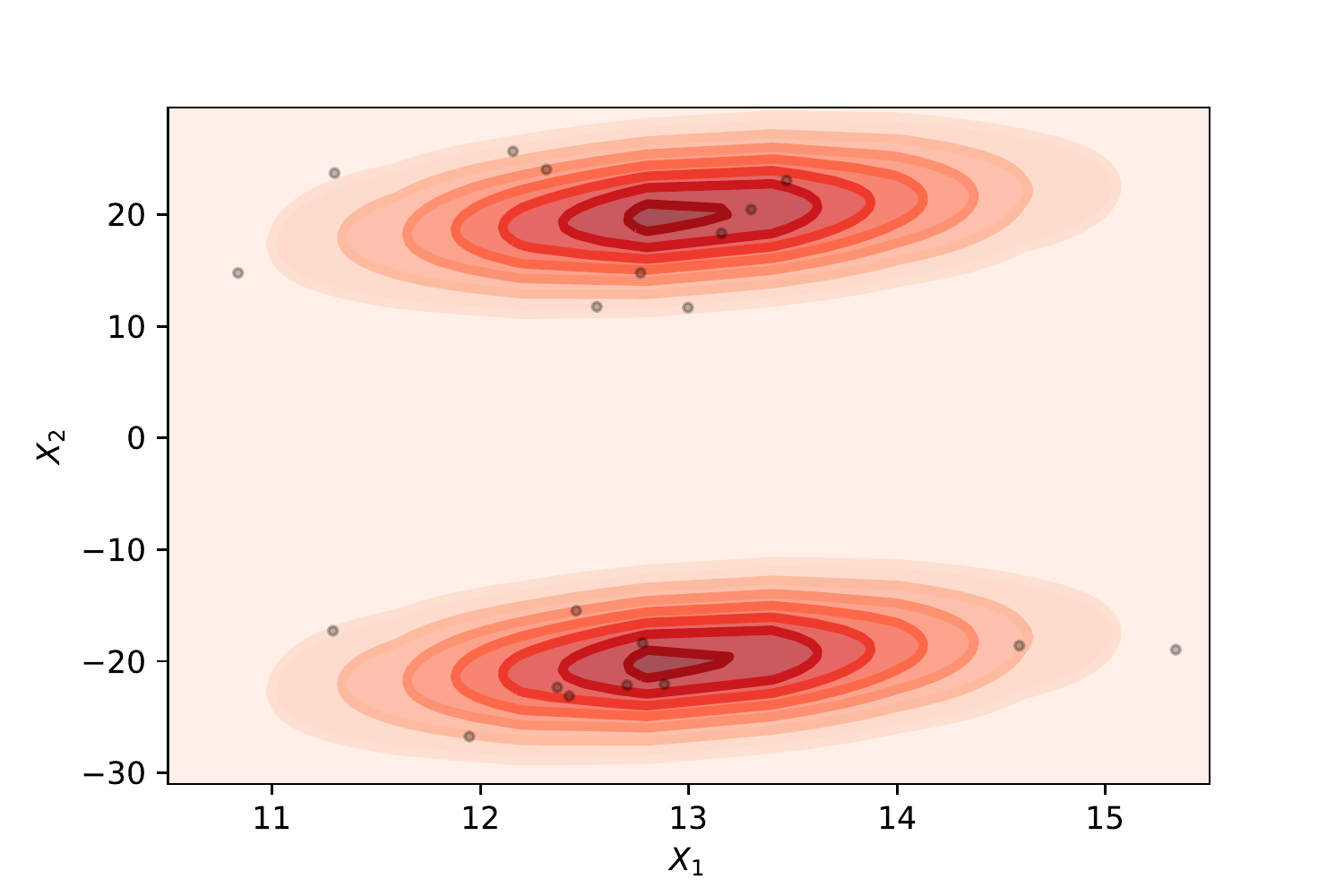}
  \includegraphics[width=0.33\textwidth]{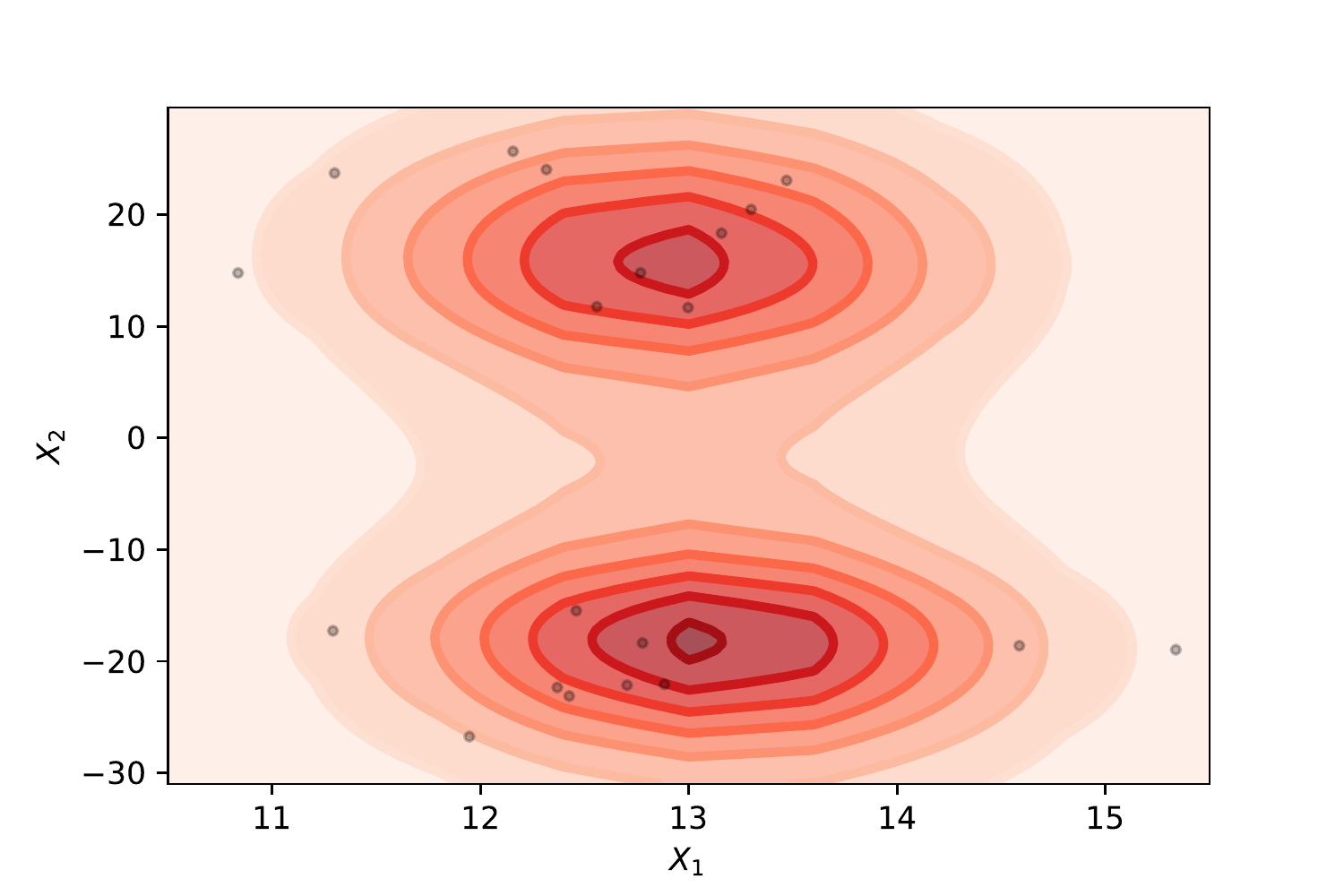}
  \includegraphics[width=0.33\textwidth]{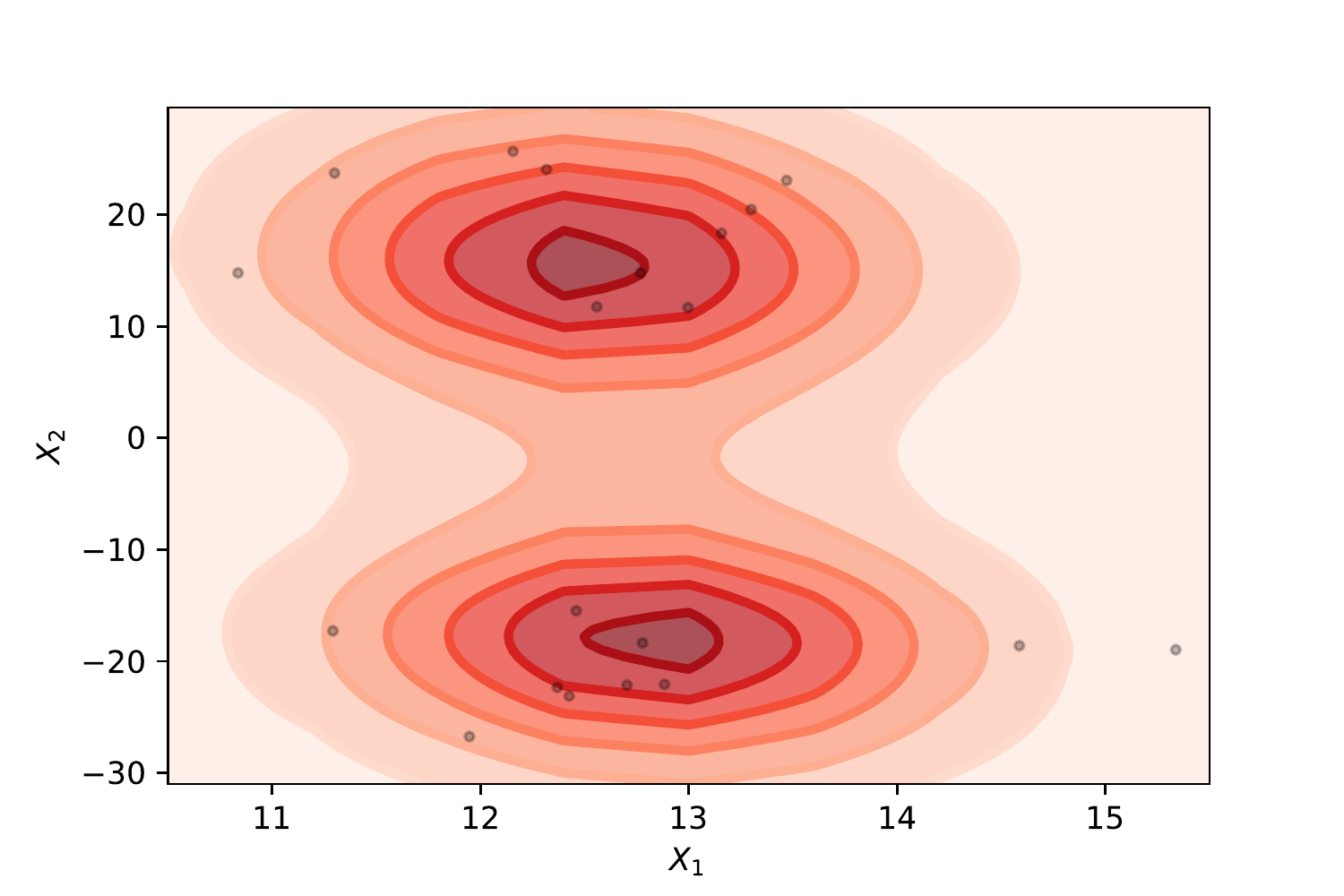}
  }
  
  \caption{(top left): the groundtruth density $0.5\,\mathcal{N}\left(\begin{pmatrix}
13 \\
-20 \\
\end{pmatrix},\begin{pmatrix}
1 & 
\frac{3\sqrt{5}}{5}\\
\frac{3\sqrt{5}}{5} &20 \\
\end{pmatrix}\right)+0.5\,\mathcal{N}\left(\begin{pmatrix}
13 \\
20 \\
\end{pmatrix},\begin{pmatrix}
1 & \frac{3\sqrt{5}}{5}\\
\frac{3\sqrt{5}}{5} &20 \\
\end{pmatrix}\right)$; (top middle): the posterior mean density based on 100 observations drawn from the true density using our proposed model; (top right): the posterior mean density based on 100 observations drawn from the true density with posterior samples of lengthscale parameter via HMC using the fully nonparametric model. The bottom panels are similar, now based on 20 observations drawn from the true density.}\label{fig:normalinvgamma}
\end{figure}
\begin{figure}
  \centerline{
  \includegraphics[width=0.4\textwidth]{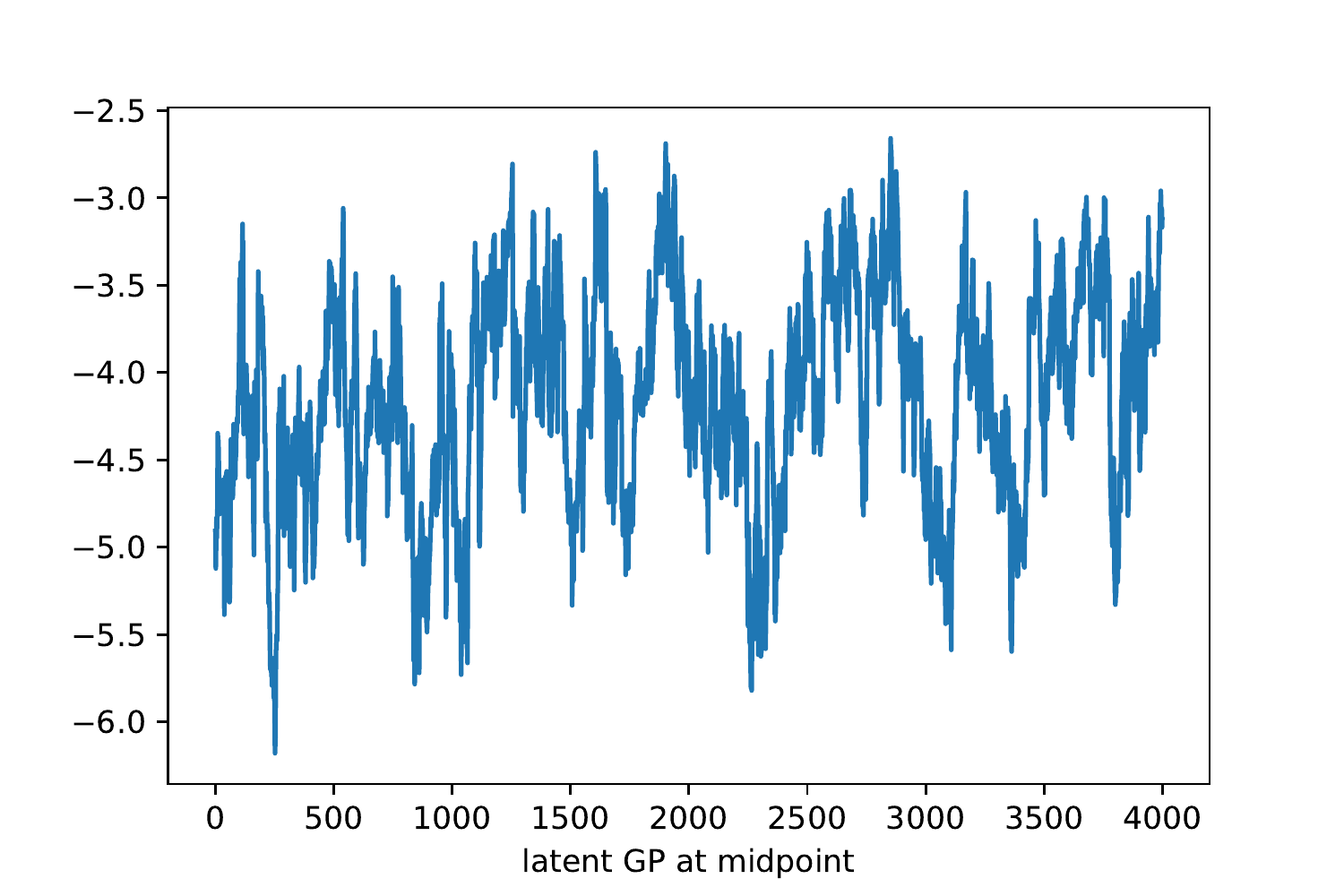}
  \includegraphics[width=0.4\textwidth]{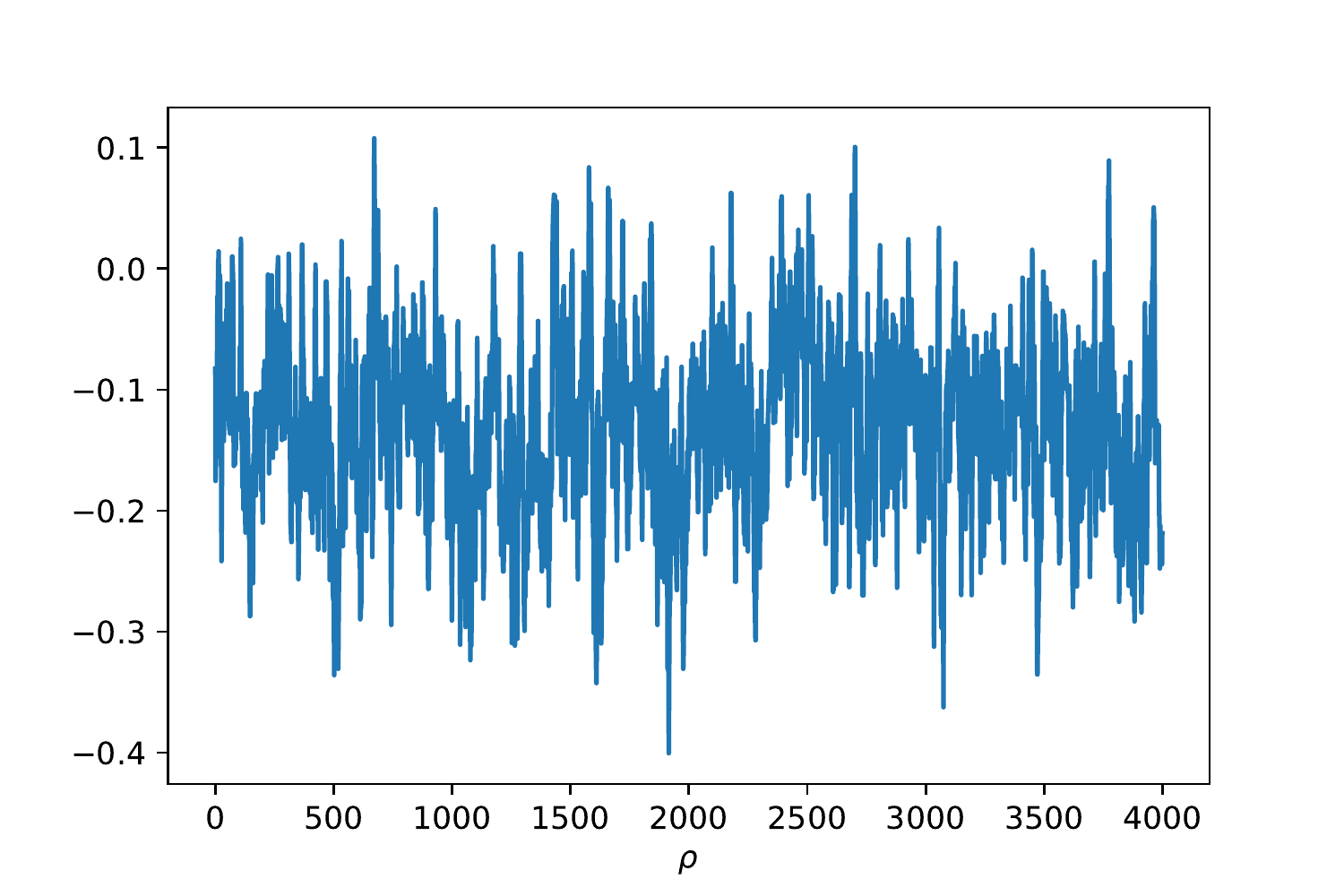}
  \includegraphics[width=0.4\textwidth]{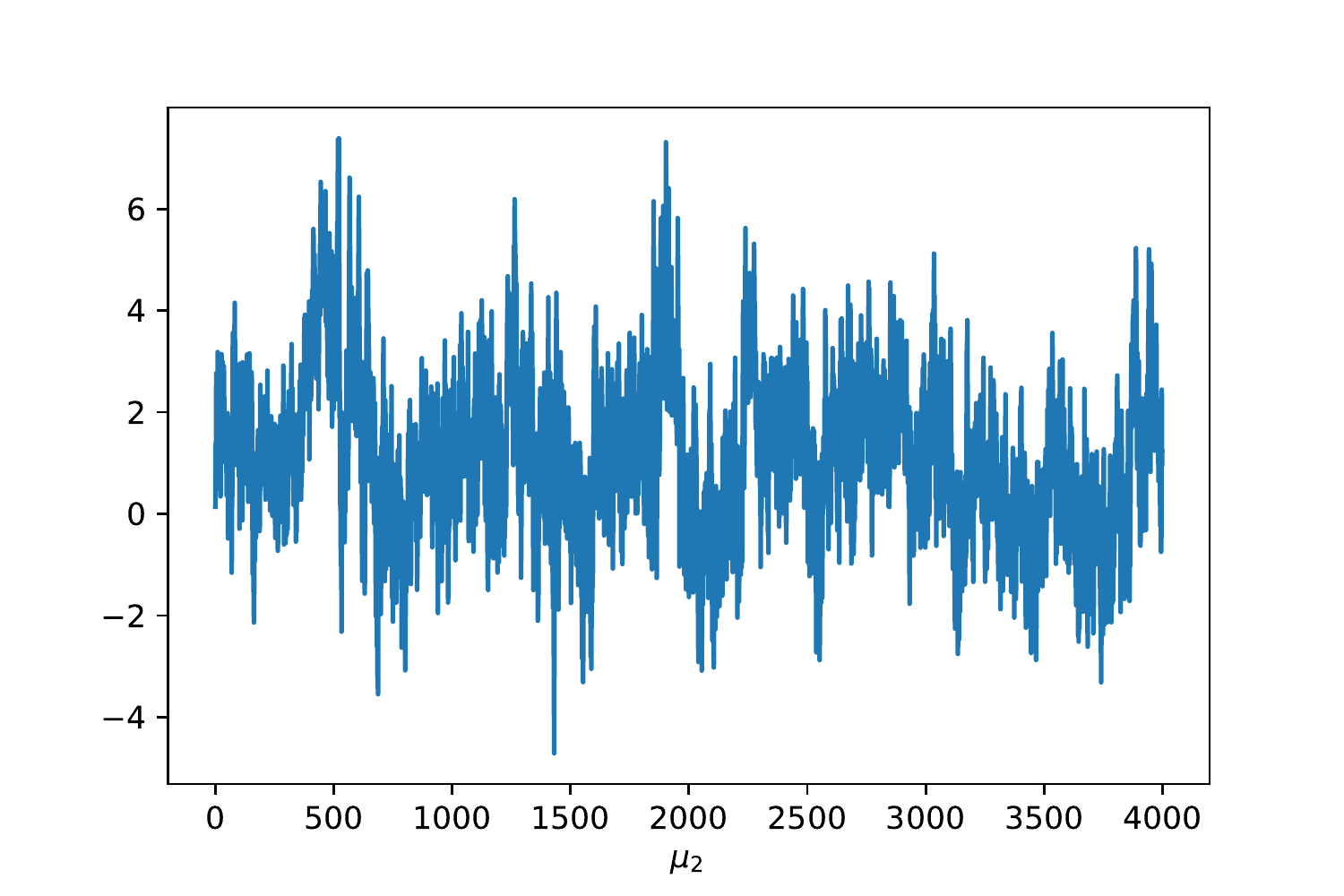}
  }
  \caption{MCMC traceplots for posterior samples of the latent GP at the midpoint along with parameters $\rho$ and $\mu_2$ for the first synthetic example.}\label{fig:traceplotssyn2}
\end{figure}
 Here, we generated 2-dimensional datasets of size 100 and 20 from the following mixture of two Gaussians: $$ (X_1, X_2)\sim 0.5\,\mathcal{N}\left(\begin{pmatrix}
13 \\
-20 \\
\end{pmatrix},\begin{pmatrix}
1 & 
\frac{3\sqrt{5}}{5}\\
\frac{3\sqrt{5}}{5} &20 \\
\end{pmatrix}\right)+0.5\,\mathcal{N}\left(\begin{pmatrix}
13 \\
20 \\
\end{pmatrix},\begin{pmatrix}
1 & \frac{3\sqrt{5}}{5}\\
\frac{3\sqrt{5}}{5} &20 \\
\end{pmatrix}\right).$$ 
The leftmost panel of~\cref{fig:normalinvgamma} shows the contours of this probability density with different sets of observations.
Observe that the first coordinate of these observations follows a Gaussian distribution 
$p_1(X_1)=\mathcal{N}(X_1\,|\,\mu_1,\sigma_1^2),$
where $\mu_1=13$ and $\sigma_1^2=1$.
We assume this marginal constraint is known, and model 
{each} dataset with our marginally constrained nonparametric prior.
We choose the centering distribution $\pi_0(X_2|X_1;\theta, \phi)$ as a conditional normal distribution, namely $$\pi_0(X_2|X_1=x_1;\theta, \phi)=\mathcal{N}\left(X_2 \,;\mu_2+\rho\sigma_2 \sigma_1^{-1}(x_1-\mu_1), \left(1-\rho^2\right)\sigma_2^2\right)$$ where $\theta=(\rho, \mu_2, \sigma_2^2)$, and $\phi=(\mu_1, \sigma^2_1)$, which are known values here. 
We place a Normal-Inverse-Gamma-Uniform prior on $\theta=(\rho, \mu_2, \sigma_2^2)$, with $\mu_0=0$, $k=0.001$, $\alpha_0=0.001$ and $\beta_0=0.001$: 

$$p(\rho, \mu_2, \sigma_2^2)\propto \mathcal{N}\left(\mu_2\,;\mu_0,\cfrac{\sigma_2^2}{k_0}\right)\cdot\text{ Inv-Gamma} (\sigma_2^2\,;\alpha_0, \beta_0)\cdot\mathbbm{1}_{[-1,1]}(\rho).$$ 
Running our MCMC sampler from~\cref{algo:3} (including a Hamiltonian Monte Carlo update for the lengthscale parameter in the kernel covariance matrix), we produce 4000 posterior samples for $\theta$ and $\lambda$ for each of the two datasets   and then use those posterior samples to compute the mean of data densities, which is presented in the middle column of~\cref{fig:normalinvgamma}. {We include the traceplots for the posterior samples corresponding to the dataset of size 100 in figure~\ref{fig:traceplotssyn2}. For comparison, we also compute the mean of data densities for the fully nonparametric model, which is displayed at the rightmost column of~\cref{fig:normalinvgamma}. We can observe that the fully nonparametric model does not satisfy the marginal distribution constraint exactly, which is further supported in the quantitative comparison below.}

To quantitatively compare posterior results of our proposed marginally constrained model and the fully nonparametric model, we generate a test dataset of size 60 and 5 training datasets of size 20, 60 and 100 respectively from the mixture of normal distribution. 
For each model and each training dataset, 
we produce 4000 posterior samples and then use those posterior samples to compute `marginal' and `joint' loglikelihoods of the test dataset. {Here, the joint loglikehood refers to the standard logarithmic probability of the test dataset, while the marginal loglikelihood describes the logarithmic probability of the first component of the test datapoints (viz.\ the constrained component).   }
{Finally, for each model 
and each training sample size, the median of
average loglikelihoods over posterior samples across the 5 training-test splits is reported in ~\cref{table:marginalsyn}  and~\cref{table:jointsyn}.}
\begin{table*}
\footnotesize
\begin{center}
\caption{Average marginal loglikelihood in the first synthetic example}
\label{table:marginalsyn}
\begin{threeparttable}
 \begin{tabular}
{
  @{\kern-.5\arrayrulewidth}
  |p{\dimexpr3.5cm-4\tabcolsep-.5\arrayrulewidth}
  |p{\dimexpr4.5cm-5\tabcolsep-.5\arrayrulewidth}
  |p{\dimexpr4.5cm-5\tabcolsep-.5\arrayrulewidth}
  |@{\kern-.5\arrayrulewidth}
}
 \hline
  Training dataset size & Truth/Our proposed model  & Fully nonparametric model 
  \\ [0.5ex]
 \hline
 \textbf{100} &\textbf{-82.79}& \textbf{-83.57}\\ 
 \hline
 \textbf{60} &\textbf{-82.79}& \textbf{-84.02}\\
 \hline
 \textbf{20} &\textbf{-82.79}& \textbf{-86.88 }\\
 \hline
\end{tabular}
\end{threeparttable}
\end{center}
\end{table*}
\begin{table*}
\footnotesize
\begin{center}
\caption{ Average joint loglikelihood in the first synthetic example}
\label{table:jointsyn}
\begin{threeparttable}
\begin{tabular}
{
  @{\kern-.5\arrayrulewidth}
  |p{\dimexpr3.5cm-4\tabcolsep-.5\arrayrulewidth}
  |p{\dimexpr1.5cm-2\tabcolsep-.5\arrayrulewidth}
  |p{\dimexpr3.5cm-4\tabcolsep-.5\arrayrulewidth}
  |p{\dimexpr4.5cm-5\tabcolsep-.5\arrayrulewidth}
  |@{\kern-.5\arrayrulewidth}
}
 \hline
  Training dataset size & Truth  & Our proposed model & Fully nonparametric model\\ [0.5ex]
 \hline
 
 \textbf{100} &\textbf{-299.63} &\textbf{-304.69}  &\textbf{-305.67}\\
 \hline
 \textbf{60} &\textbf{-299.63}& \textbf{-306.70} &\textbf{-308.64}\\
 \hline
 \textbf{20} &\textbf{-299.63} &\textbf{-318.41}  &\textbf{-322.40} \\
 \hline
\end{tabular}
\end{threeparttable}
\end{center}
\end{table*}

As reported in the two tables, we conclude that, compared with the fully nonparametric model, both joint and marginal loglikelihoods for our proposed marginally constrained model are always closer to the truth. The difference of either joint or marginal loglikelihoods between the two models increases as the size of traning datasets diminishes.

\subsection{Synthetic example 2}
 For this example, we consider a setting that requires a bit more structure than our original model.
Specifically, we assume that two random variables $X_1$ and $X_2$, where $X_1$ is known to follow a normal distribution with unknown parameters $\mu_1$ and $\sigma_1^2$, i.e., $p_1(X_1)=\mathcal{N}(X_1\,;\mu_1,\sigma_1^2).$
While the conditional distribution of $X_2$ is unknown, it is known to have a variance of the same order as $X_1$; this is a reasonable assumption in many settings. We now seek to model a dataset of observations of $(X_1, X_2)$, while incorporating both pieces of information into the joint model. 
Our original model already allows the marginal to be incorporated, and a simple modification to incorporate the variance constraint is by setting 
the centering distribution $\pi_0(X_2|X_1; \theta,\phi)$ to a conditional normal distribution as below:  
$$\pi_0(X_2|X_1=x_1;\theta,\phi)=\mathcal{N}(X_2 \,;\mu_2+\rho(x_1-\mu_1), (1-\rho^2)\sigma_1^2).$$ 
Observe that we use the same variance $\sigma^2_1$ as the marginal constraint. 
We place the normal-inverse-gamma prior on $\theta=(\rho, \mu_2, \sigma_1^2)$, specifically, with $\mu_0=0$, $k_0=0.001$, $\alpha_0=0.001$ and $\beta_0=0.001$, we set
$$P(\rho, \mu_2, \sigma_1^2)    \propto N(\mu_2;\mu_0,\frac{\sigma_1^2}{k_0}) \text{Inv-Gamma}(\sigma_1^2;\alpha_0, \beta_0)\mathbbm{1}_{[-1,1] }(\rho).$$ 
Simultaneously, with  $\mu_{x_0}=-10$ and $k_{x_0}=0.01$, we place the following normal prior on $\phi=\mu_1$:
\begin{align*}
    \mu_1|\sigma_1^2&\sim N(\mu_{x_0},\frac{\sigma_1^2}{k_{x_0}}).
\end{align*}
Next, we draw 15 observations from the normal distribution shown at topleft of~\cref{fig:variance equivalence2}, namely,$$(X_1, X_2)\sim
\mathcal{N}\left(\begin{pmatrix}
13 \\
-5 \\
\end{pmatrix},\begin{pmatrix}
20 & 6\\
6 &20 \\
\end{pmatrix}\right) $$
Observe that this distributions has same marginal variances for each component. We apply the modified model described above, the fully nonparametric model and our original model to the observations and compute the mean of densities estimated from 4000 posterior samples for $(\rho, \mu_2, \sigma_1^2, \mu_1, \lambda)$ as shown in~\cref{fig:variance equivalence2}. 
{Our MCMC sampler here involves a straightforward modification of~\cref{eq:theta} in ~\cref{algo:3}, now having an additional term since the parameter $\sigma_1^2$ now depends on both the marginal distribution of $X_1$ and the prior distribution of $\mu_1$. 
}
The posterior mean densities for all models show that the performance of the modified model surpasses the other two models with respect to the similarity with the true density. In absolute terms, the recovered density does a good job approximating the truth, despite there being 15 observations.

Apart from the straight qualitative results, we use the same metric in the first synthetic example, namely the joint loglikelihood, to compare the three different models. We generate 10 pairs of training and test datasets of size 15 from the true normal distribution. The lengthscale parameter is updated  via HMC. For each model and each pair of datasets, we produce 4000 posterior samples and compute the joint loglikelihoods of the test dataset.~\Cref{table:jointsyn3} reports quantiles of joint loglikelihoods among the 10 pairs of datasets and shows that the modified model outpeforms the other two models.

\begin{figure}
  \centerline{
  \includegraphics[width=0.65\textwidth]{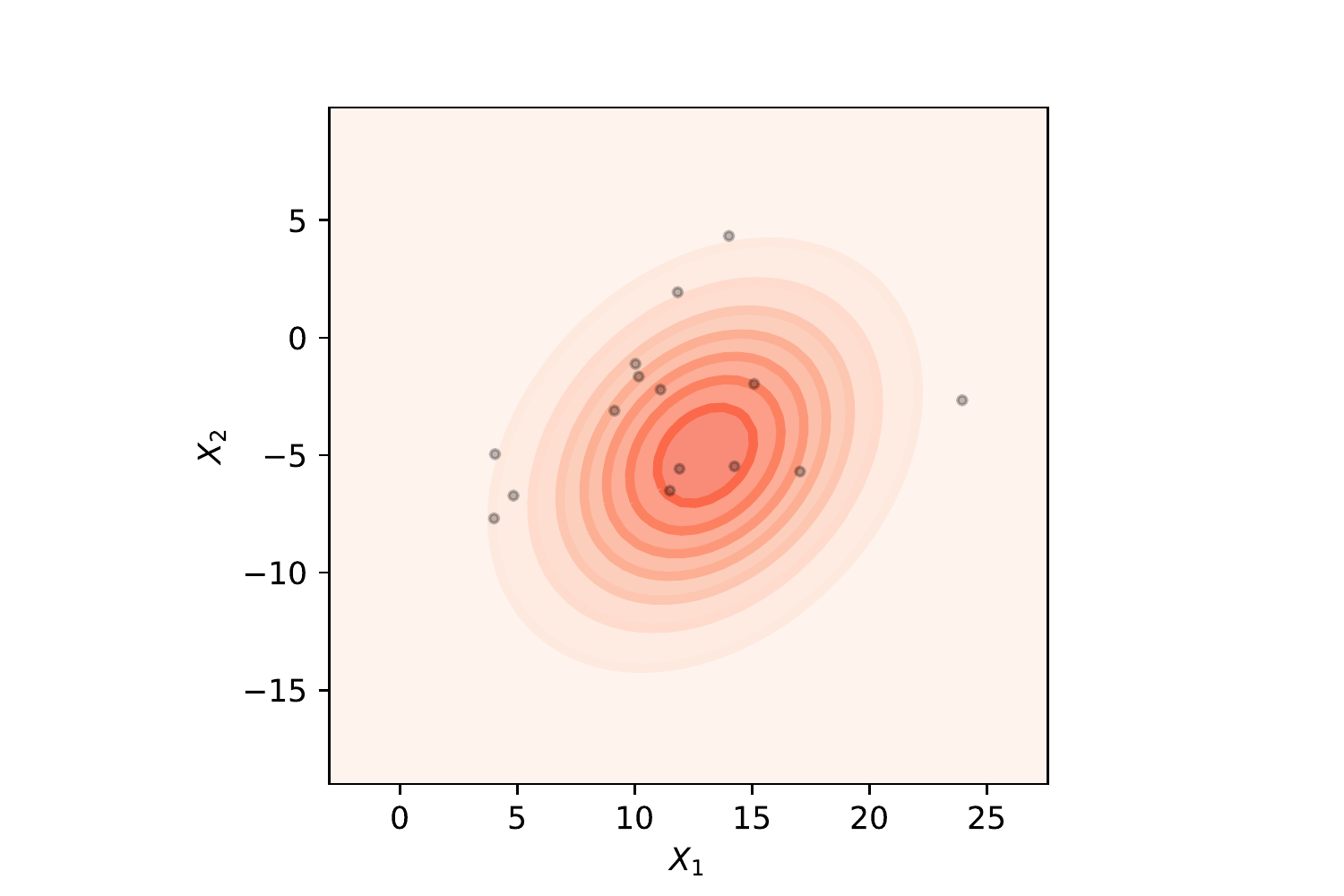}
    \includegraphics[width=0.65\textwidth]{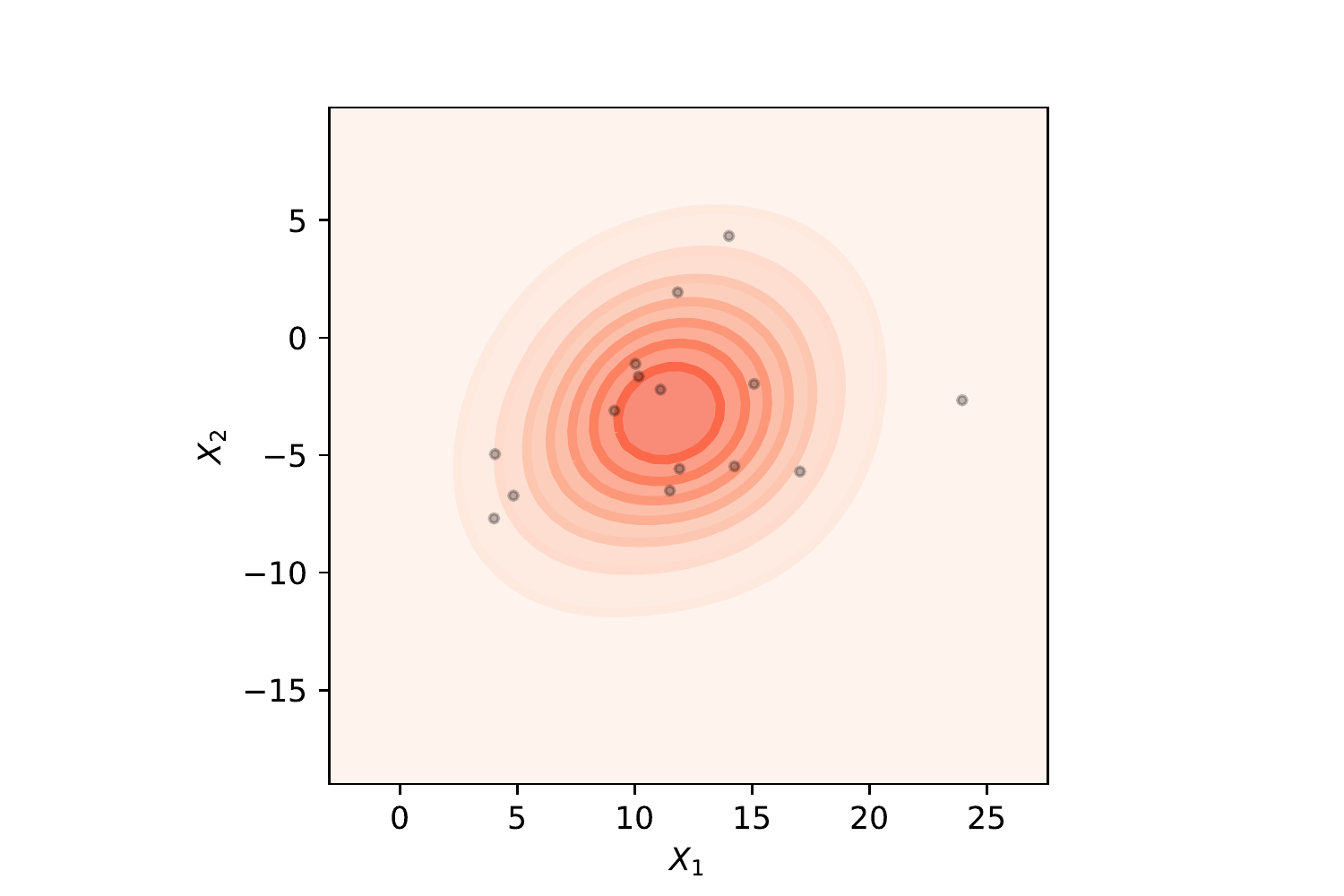}
  
  }
  \centerline{
   \includegraphics[width=0.65\textwidth]{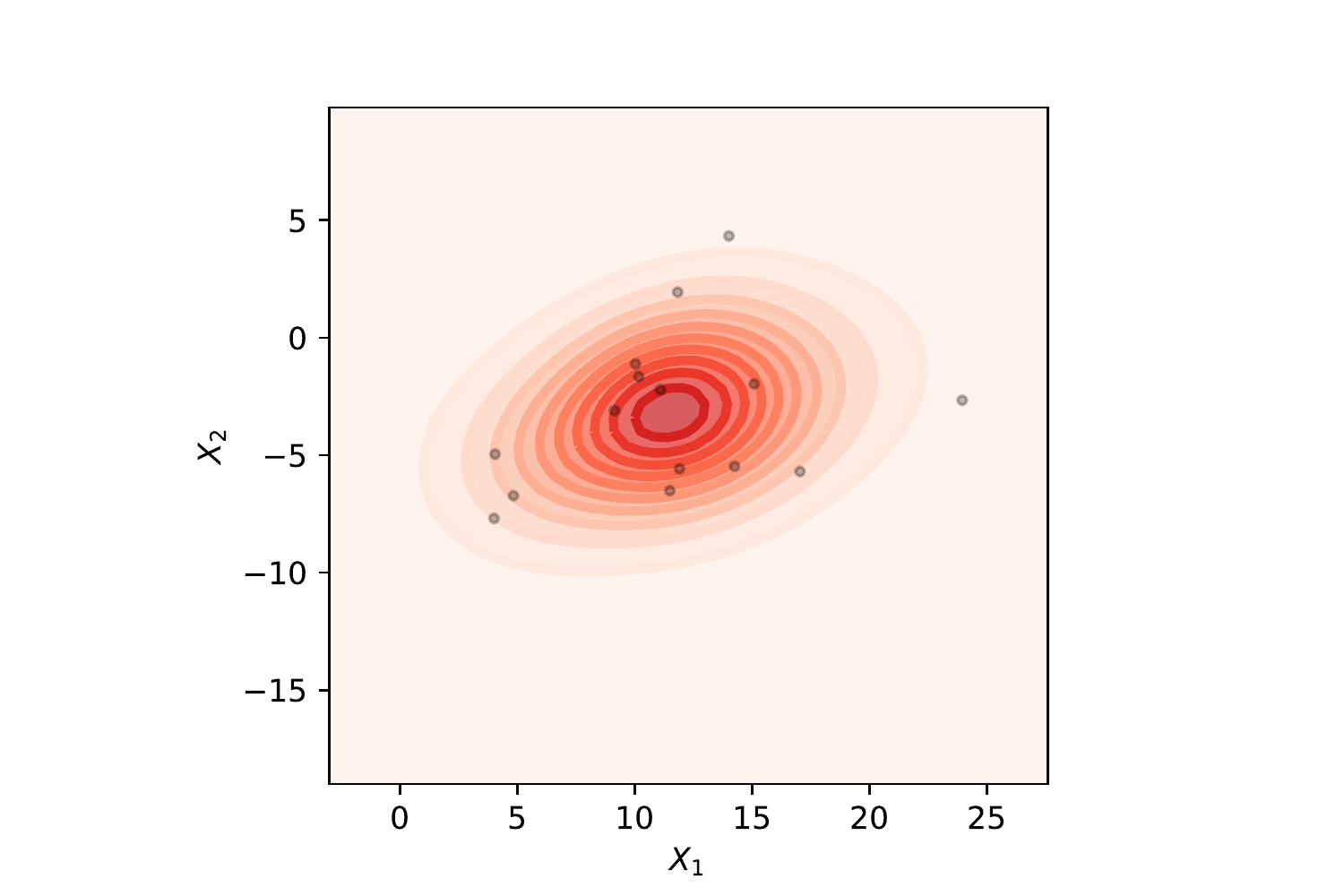}
  \includegraphics[width=0.65\textwidth]{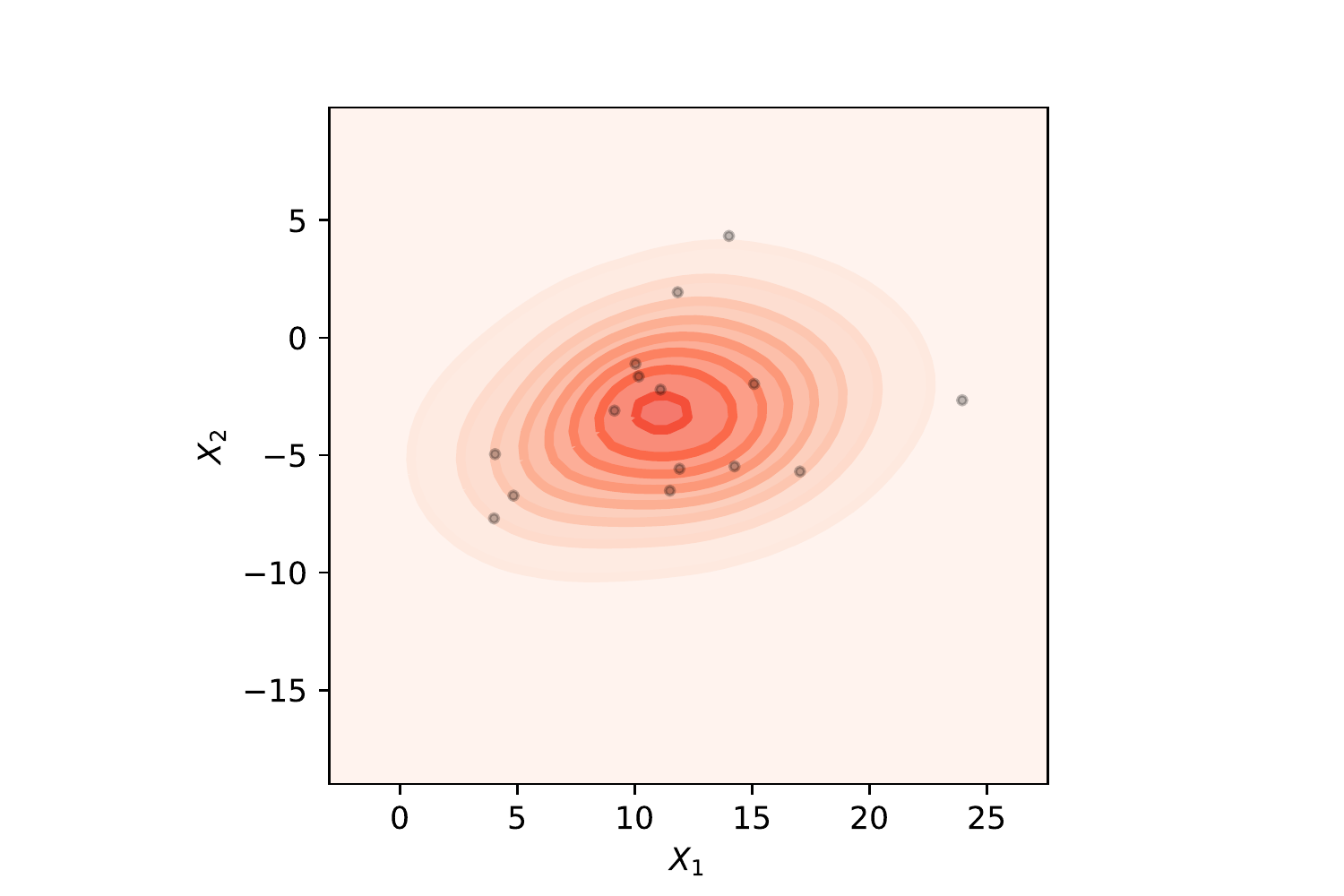}}
  \caption{(top left): the true density $\mathcal{N}\left(\begin{pmatrix}
13 \\
-5 \\
\end{pmatrix},\begin{pmatrix}
20 & 6\\
6 &20 \\
\end{pmatrix}\right) $; (top right): the posterior mean density given 15 observations drawn from the true density using our modified model; (bottom left): the posterior mean density given 15 observations using the fully nonparametric model; (bottom right): the posterior mean density using 15 observations using our original proposed model.} \label{fig:variance equivalence2}
\end{figure} 

\begin{table*}
\footnotesize
\begin{center}
\caption{Average joint loglikelihood in the second synthetic example. The three columns present joint loglikelihoods in the format 0.50 quantile (0.25 quantile, 0.75 quantile) using 10 pairs of training/test datasets.}
\label{table:jointsyn3}
\begin{threeparttable}
 \begin{tabular}
{
  @{\kern-.5\arrayrulewidth}
  |p{\dimexpr4cm-4.5\tabcolsep-.5\arrayrulewidth}
  |p{\dimexpr4.5cm-5\tabcolsep-.5\arrayrulewidth}
  |p{\dimexpr7cm-7.5\tabcolsep-.5\arrayrulewidth}
  |@{\kern-.5\arrayrulewidth}
}
 \hline
 The modified model&  Fully nonparametric model& Our proposed marginally constrained model\\ [0.5ex]
\hline
\textbf{-88.45\,(-90.07,\,-85.68)}&\textbf{-90.25\,(-95.52, \,-87.71)}&\textbf{-89.33\,(-92.03,\, -84.78)}\\
\hline
\end{tabular}
\end{threeparttable}
\end{center}
\end{table*}

\subsection{Real Example 1} \label{sec:pm}
{ Particulate matter 2.5 (PM 2.5) refers to a category of particles in the air that are 2.5 micrometers or less in size~\citep{ott1990physical}. Their ability to penetrate deeply into the lung makes them dangerous to human health.}
It is common to model concentration levels of PM 2.5 with a lognormal  distribution and this concentration level is known to be highly correlated with outdoor temperature; see for example~\citep{ott1990physical}.
We obtain measurements of the daily average concentration level of PM 2.5 and outdoor temperature in Clinton Drive in Houston, TX(CAM 55) for the year 2020\footnote{from the website https://www.tceq.texas.gov/cgi-bin/compliance/monops/yearly\_summary.pl}. 
The original dataset consists of 366 daily observations, which we filtered
to eliminate outliers and missing data points to obtain the final dataset of 356 observations. This is plotted in~\cref{fig:PM2.5dataset}.

 \begin{figure}
  \centerline{
  \includegraphics[width=0.6\textwidth]{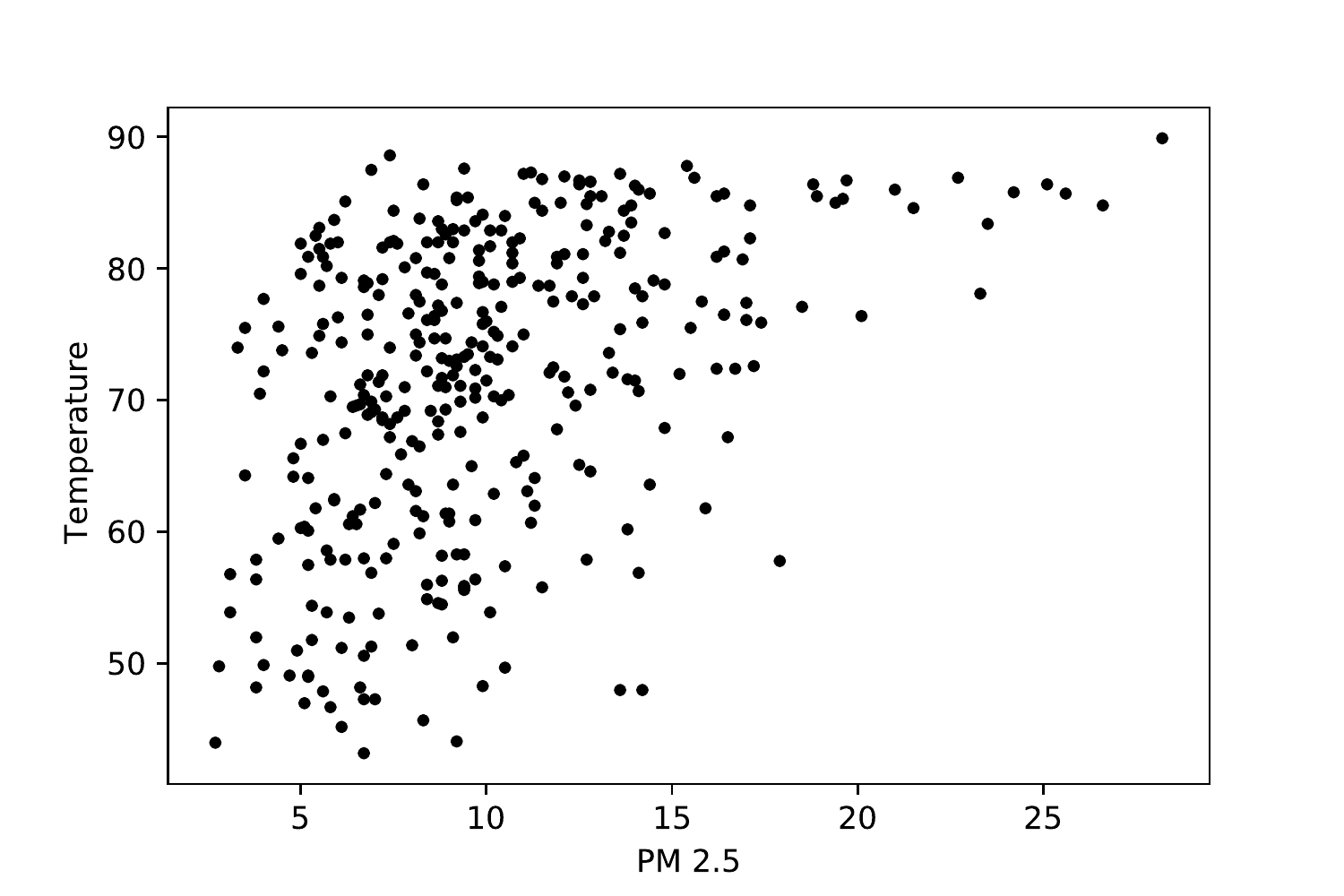}
   \includegraphics[width=0.6\textwidth]{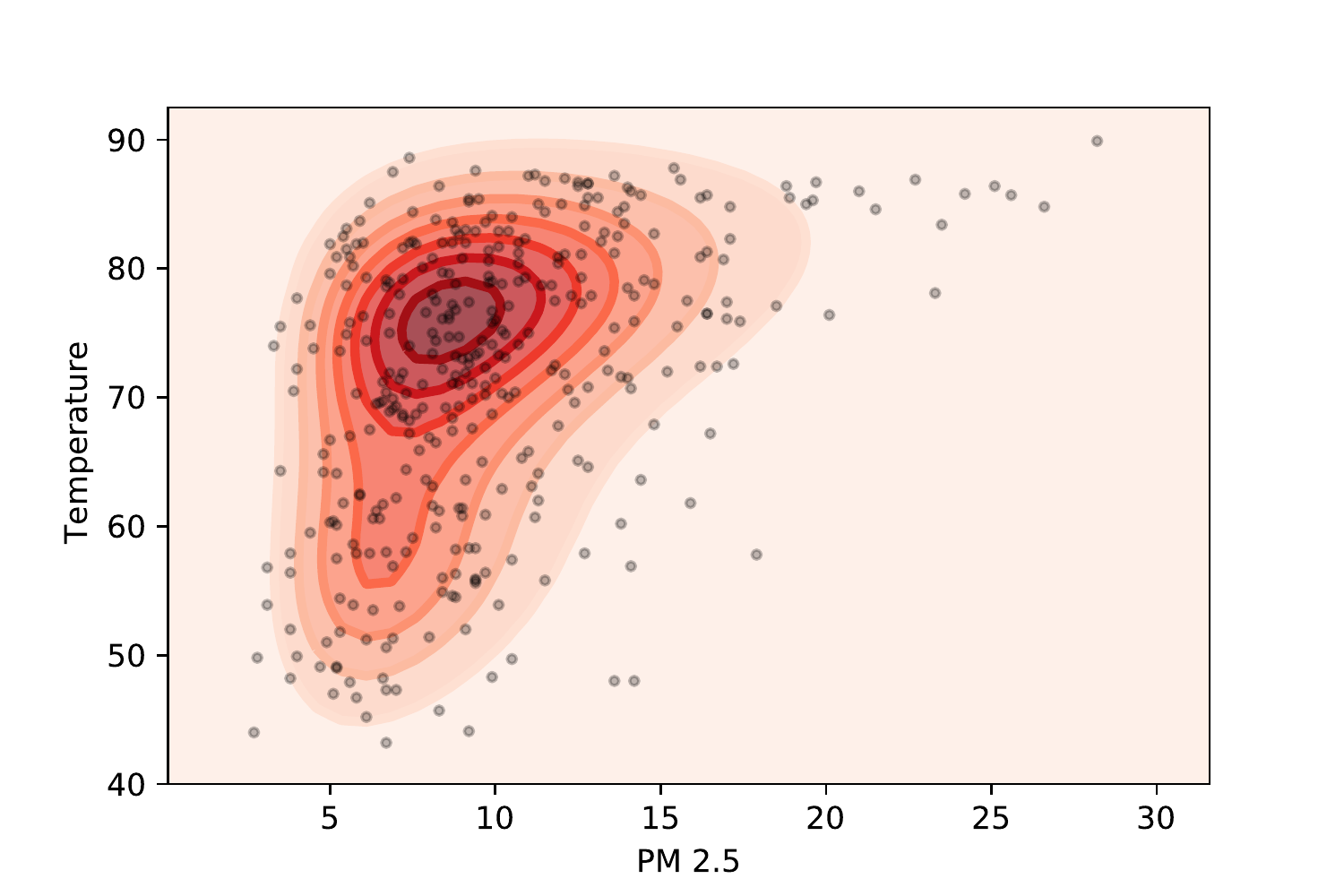}
  }\caption{(left): the PM2.5 dataset showing PM2.5 levels versus temperature; (right): the posterior mean density based on 4000 MCMC samples.
  }\label{fig:PM2.5dataset}
\end{figure}
Let $X_1$ denote the daily average concentration levels of PM 2.5 and $X_2$ denote the daily average outdoor temperature.
We applied our model to this dataset of $(X_1,X_2)$ pairs, imposing a lognormal family constraint on the PM 2.5 concentration levels.
{In this example, a minor modification to the centering distribution is required and an additional prior is placed on the parameters of the family constraint. 
Denoting the parameters of the lognormal family as $\phi=(\mu_x, \sigma_x^2)$, we placed a conjugate normal-inverse-chi-squared prior on these: 
\begin{align*}
    \sigma_x^2&\sim \cfrac{v_{x0}\,\sigma^2_{x0}}{\chi^2_{v_{x0}}}\,, \,  \quad        
    \mu_x\,|\,\sigma_x^2\sim \mathcal{N}(\mu_{x0}\, ,\, \cfrac{\sigma_x^2}{k_{x0}}).
\end{align*} 
We set $\mu_{x0}=-10, k_{x0}=0.01,v_{x0}=0.001$ and $\sigma^2_{x0}=5$.
For our centering distribution, we use 
\begin{align}
\pi_0(X_2|X_1=x_1, \theta, \phi)=\mathcal{N}\left(X_2 \,;\mu_2+\rho\sigma_2 s_x^{-1} \left(x_1- m_x   \right), \left(1-\rho^2\right)\sigma_2^2\right), \label{eq:centering distribution}
\end{align}
where $\theta=(\rho, \mu_2, \sigma_2^2)$, $s_x^2=\Var\left[X_1\right]=\exp\left\{(\sigma_x^2-1)(2\mu_x+\sigma_x^2)\right\}$, and $m_x=\E\left[X_1\right]=\exp\left\{\mu_x+\frac{\sigma_x^2}{2}\right\}$.
We place a normal-inverse-gamma prior on $\theta=(\rho, \mu_2, \sigma_2^2)$: $$P(\rho, \mu_2, \sigma_2^2)\propto \mathcal{N}\left(\mu_2\,;\mu_0,\cfrac{\sigma_2^2}{k_0}\right)\cdot\text{ Inv-Gamma} (\sigma_2^2\,;\alpha_0, \beta_0)\cdot\mathbbm{1}_{[-1,1]}(\rho)$$
where $\mu_0=0$, $k=0.001$, $\alpha_0=0.001$ and $\beta_0=0.001$. 
}
Using our MCMC sampler with this model, we draw samples from the posterior distribution given the PM 2.5 dataset, plotting the posterior mean density in~\cref{fig:PM2.5dataset}. {We see that the posterior mean density captures the characteristics of the dataset reasonably well.}
The model does struggle to capture some of the outliers along the $X_1$-component, though this is more a reflection of the marginal lognormal constraint, rather than the nonparametric component.
Modeling both components together allows practitioners to assess this limitation for different values of the temperature variable, and the figure suggests that failures of the parametric assumption occur at large values of the temperature variable.

Analogous to the quantitative comparison in the first synthetic example, we also perform a comparison among our proposed marginally constrained model, a fully nonparametric model and a parametric model. 
For the fully nonparametric model and our proposed model, the lengthscale parameter is 
updated via HMC. 
We also fit a bivariate normal parametric model to fit temperature and the log-transformed PM 2.5 variable; note that the parametric model satisfies the lognormal family constraint on PM 2.5. We repeat splitting the dataset into a training dataset of size 296 and a test dataset of size 60 5 times and obtain 5 pairs of training and test datasets. For each model 
and each pair of training and test datasets, we produce 4000 posterior samples according to the matching training dataset and then use those posterior samples to compute the joint and marginal loglikelihoods of the corresponding test dataset. Finally, for each model
, the median of the average loglikelihoods over posterior samples across the 5 pairs of training and testing datasets are reported in~\cref{table:jointreal} and~\cref{table:marginalreal}.
Both tables illustrate that our proposed marginally constrained model always behaves the best, {demonstrating the importance of flexibility in preserving the dependence structure between the two variables as well as incorporating prior information through marginal constraints in data-poor settings.}
 
\begin{table*}
\footnotesize
\begin{center}
\caption{Average joint loglikelihood in the PM 2.5 data from 5 training-test splits}
\label{table:jointreal}
\begin{threeparttable}
 \begin{tabular}
{
  @{\kern-.5\arrayrulewidth}
  |p{\dimexpr3.8cm-4.3\tabcolsep-.5\arrayrulewidth}
  |p{\dimexpr4.5cm-5\tabcolsep-.5\arrayrulewidth}
  |p{\dimexpr7cm-7.5\tabcolsep-.5\arrayrulewidth}
  |p{\dimexpr3cm-3.5\tabcolsep-.5\arrayrulewidth}
  |@{\kern-.5\arrayrulewidth}
}
 \hline
  Lengthscale parameter&  Fully nonparametric model& Our proposed marginally constrained model &  Parametric model\\ [0.5ex]
 \hline
 \textbf{HMC}&\textbf{-385.57} &\textbf{-382.24} &\textbf{-384.52} \\
 \hline
\end{tabular}
\end{threeparttable}
\end{center}
\end{table*}

\begin{table*}
\footnotesize
\begin{center}
\caption{ Average marginal loglikelihood for the PM 2.5 data for 5 training-test splits}
\label{table:marginalreal}
\begin{threeparttable}
 \begin{tabular}
{
  @{\kern-.5\arrayrulewidth}
  |p{\dimexpr3.8cm-4.3\tabcolsep-.5\arrayrulewidth}
  |p{\dimexpr4.5cm-5\tabcolsep-.5\arrayrulewidth}
  |p{\dimexpr7cm-7.5\tabcolsep-.5\arrayrulewidth}
  |p{\dimexpr3cm-3.5\tabcolsep-.5\arrayrulewidth}
  |@{\kern-.5\arrayrulewidth}
}
 \hline
  Lengthscale parameter&   Fully nonparametric model& Our proposed marginally constrained model &Parametric model\\ [0.5ex]
 \hline
 \textbf{HMC}&\textbf{-169.33} &\textbf{-164.11} &\textbf{-164.14} \\
 \hline
\end{tabular}
\end{threeparttable}
\end{center}
\end{table*}

\subsection{Real Example 2}


{In our final experiment, we consider modeling 
earthquake data.
Following~\citep{dehghani2020probabilistic}, we are interested in modeling the bivariate distribution of earthquake recurrence time and magnitude, while simultaneously ensuring that the recurrence time follows an exponential distribution~\citep{ferraes2003conditional}.}
We obtain a dataset of 45 observations from~\citet{ferraes2003conditional} (table 1) and run our proposed marginally constrained model with a family constraint. 
Denoting the rate parameter of the exponential family constraint on the recurrence time as $r$, we place a weakly informative gamma prior with both shape and rate parameters to be $0.1$.
To apply our maginally constrained model, we choose a slightly different centering distribution with a same normal-inverse-gamma prior placed on its parameters as that in the first real example described in section~\ref{sec:pm}.
We use the same centering distribution as described in equation~\ref{eq:centering distribution}, where $\theta=(\rho, \mu_2, \sigma_2^2)$, $\phi=r$, $m_x=\frac{1}{r}$ and $s_x^2=\frac{1}{r^2}$. 
The posterior mean density is presented in~\cref{fig:earthquake}.
We see that other than three outliers, the model succeeds in capturing the underlying observation pattern, and that the failure to model the observations arises from the parametric exponential constraint, which effectively robustifies the model against these outliers.
We do not report quantitative performance measures here, essentially, depending on whether or not the outliers are part of the test set, either the fully nonparametric model or our model performs best.
\begin{figure}
  \centerline{
  \includegraphics[width=0.6\textwidth]{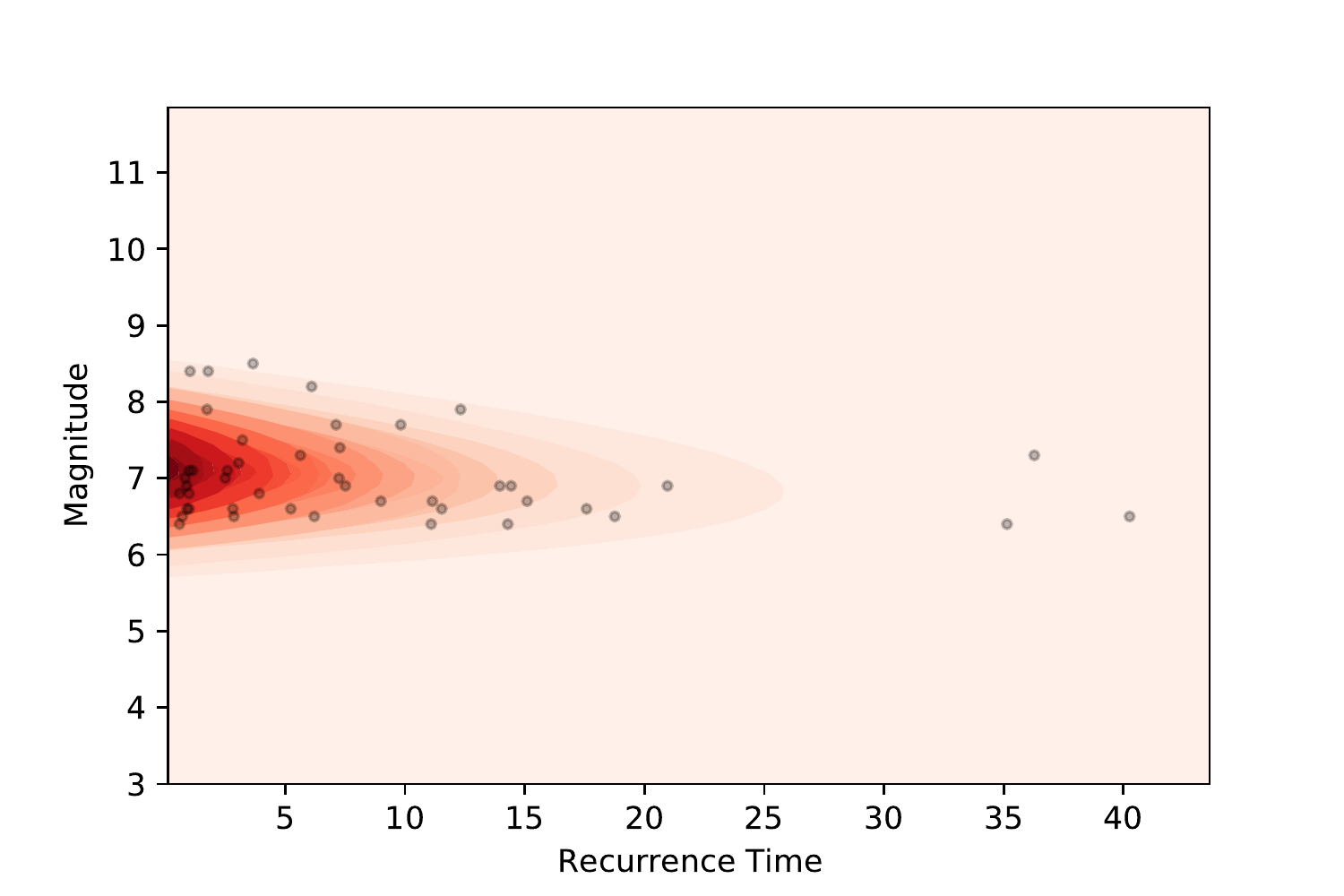}

  }
  \centerline{
  \includegraphics[width=0.6\textwidth]{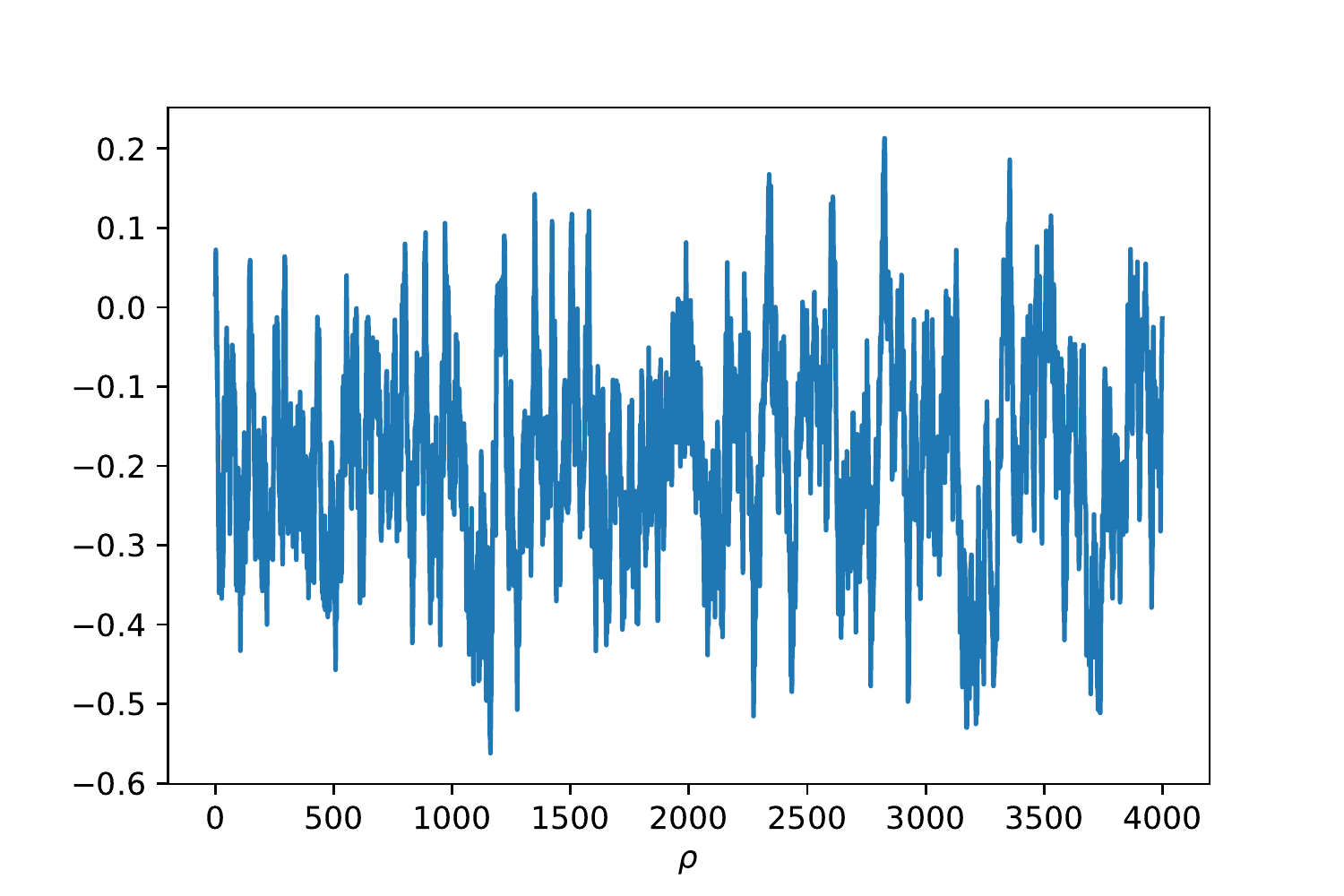}
  \includegraphics[width=0.6\textwidth]{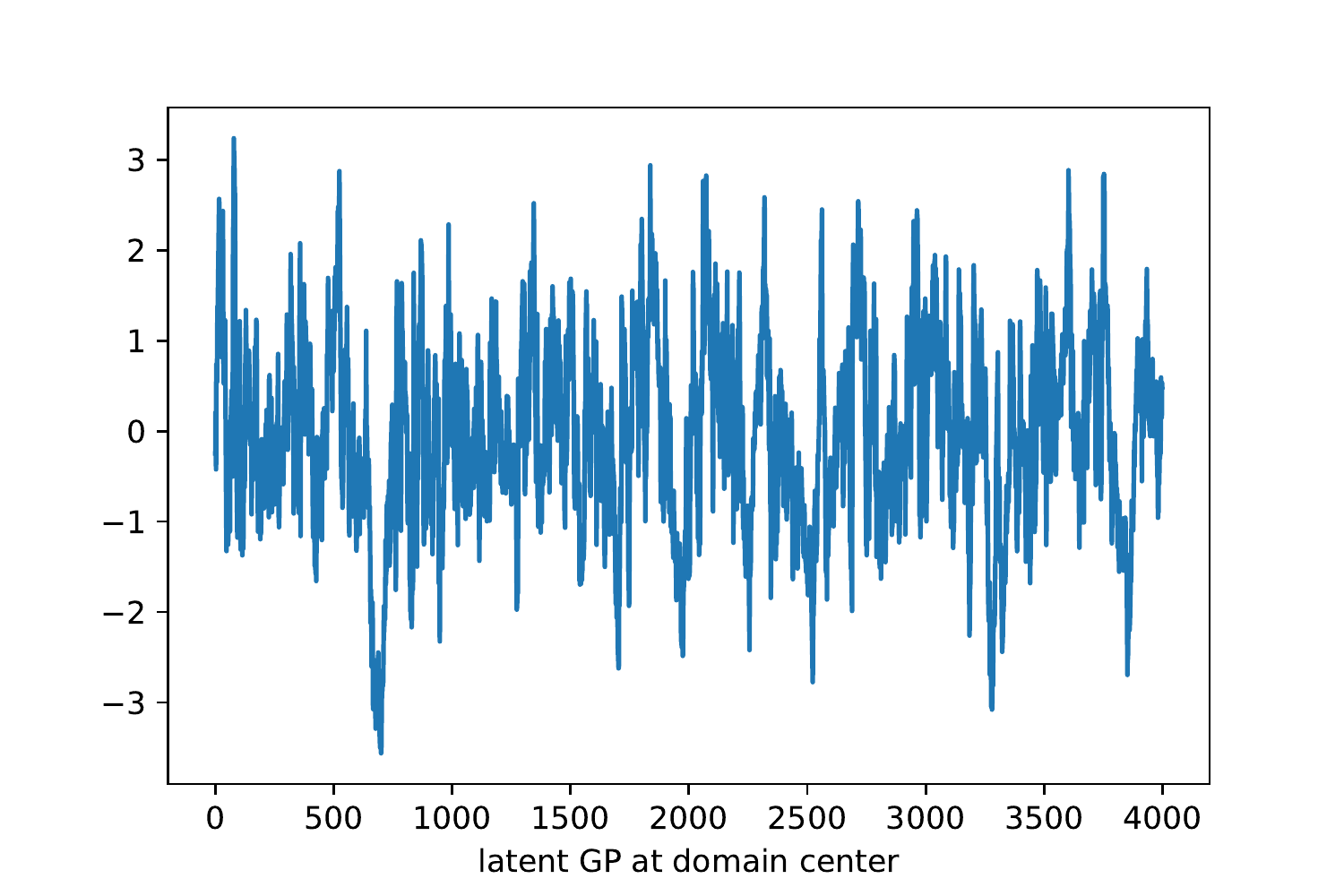}
  }
  \centerline{
    \includegraphics[width=0.6\textwidth]{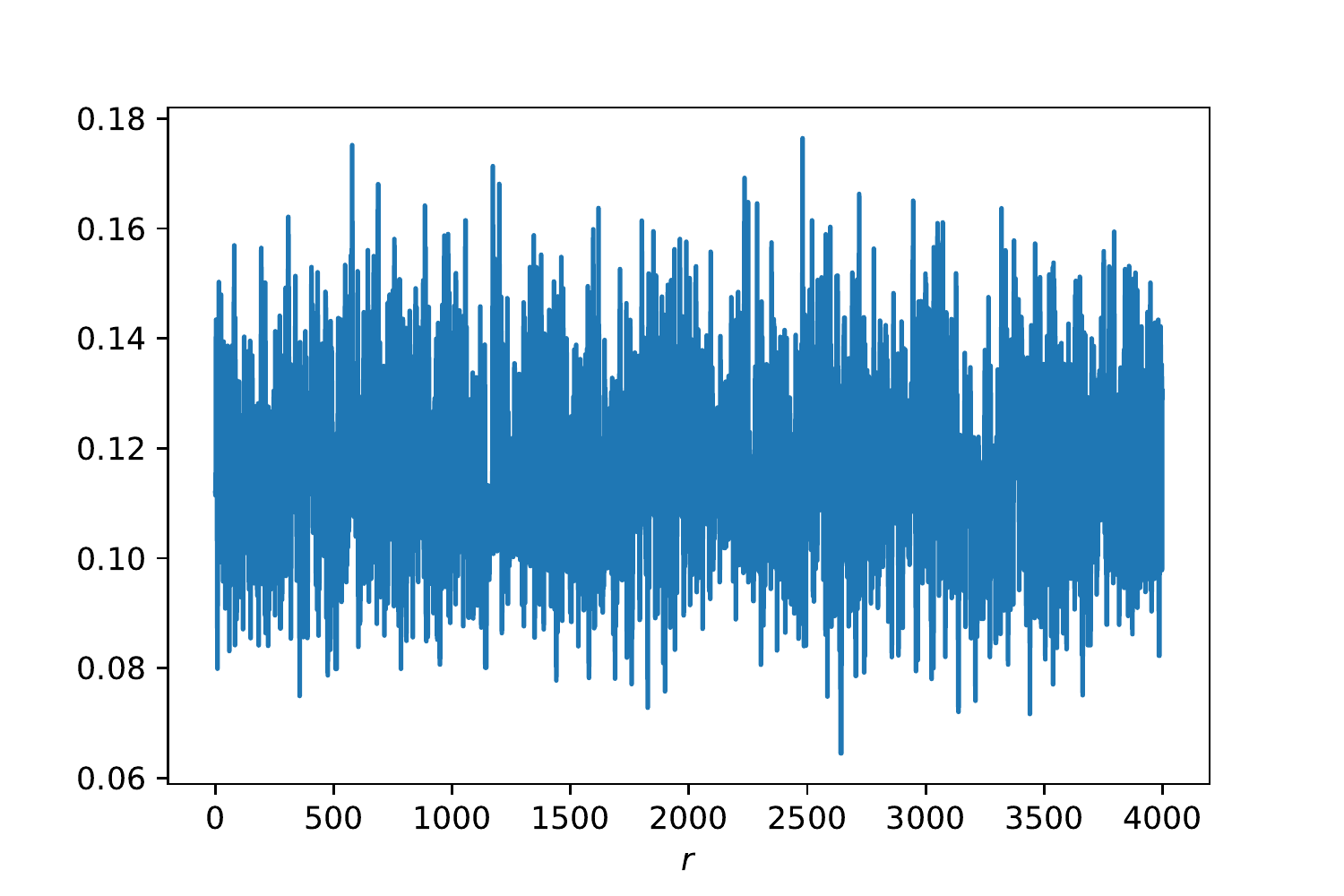}
  \includegraphics[width=0.6\textwidth]{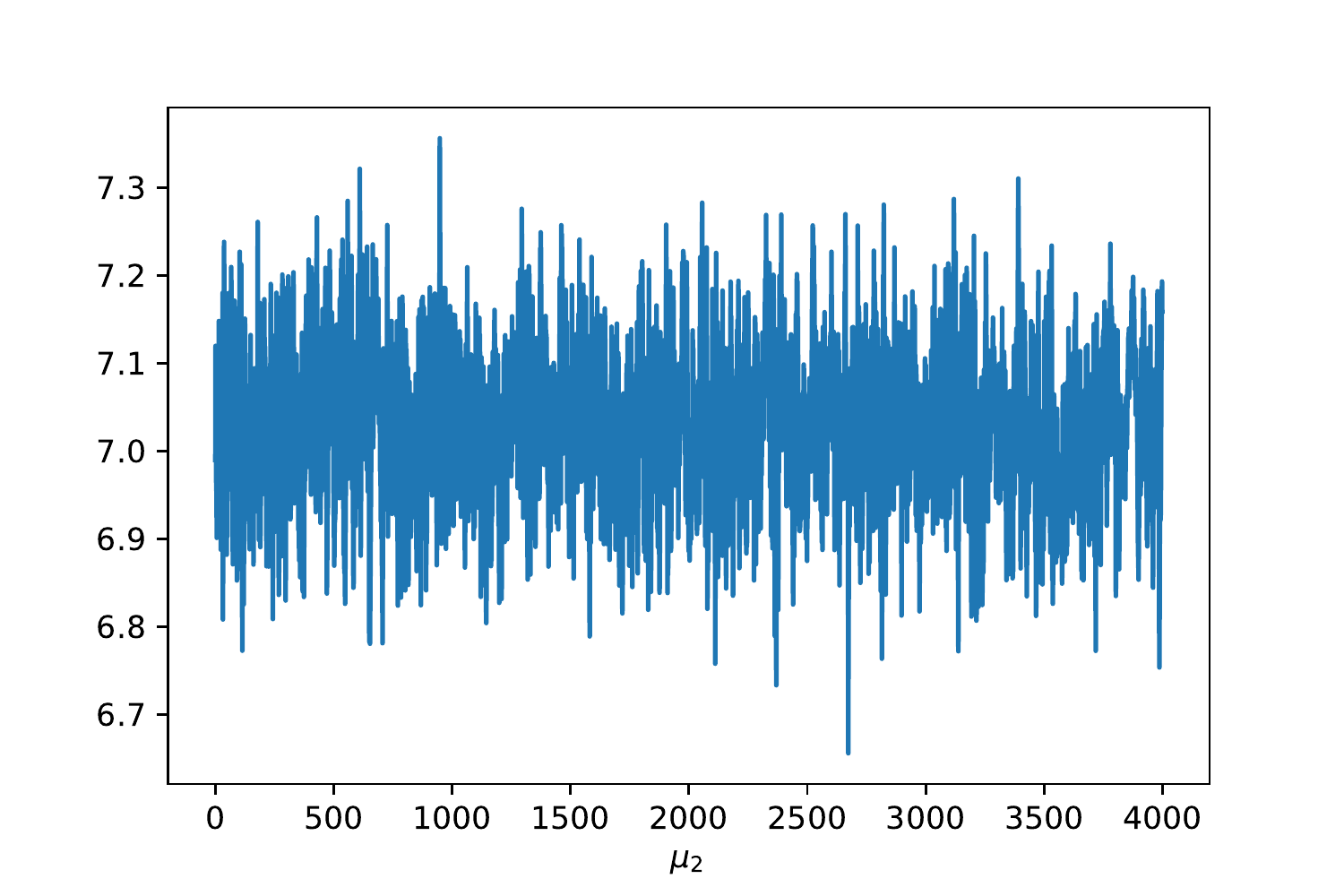}
  }

  \caption{In the top row is the posterior mean density with posterior samples of the lengthscale parameter. 
  The rest are the corresponding traceplots for posterior samples of $\theta$, $\phi$ and the latent GP at the midpoint of the selected gridpoints.}\label{fig:earthquake}
\end{figure}

To assess MCMC mixing, we evaluate the latent GP on a grid of 3540 points and run MCMC for 5000 iterations with a burn-in period of  1000 iterations. In~\cref{table:ess}, we report the minimum ESS, maximum ESS and ESS at the midpoint among all the 3540 grid points. In~\cref{fig:earthquake}, we present the traceplots of posterior samples for $\phi$, $\theta$ and the latent GP at the midpoint.
\begin{table*}
\footnotesize
\begin{center}
\caption{Effective sample size per 4000 iterations for the second real example}
\label{table:ess}
 \begin{tabular}
{
  @{\kern-.5\arrayrulewidth}
  |p{\dimexpr5.5cm-6\tabcolsep-.5\arrayrulewidth}
  |p{\dimexpr5.5cm-6\tabcolsep-.5\arrayrulewidth}
  |p{\dimexpr5.5cm-6\tabcolsep-.5\arrayrulewidth}
  |@{\kern-.5\arrayrulewidth}
}
 \hline
  Minimum ESS & Maximum ESS& ESS at midpoint \\ [0.5ex]
 \hline\hline
  16.76&
242.32&
107.09\\ 
 \hline
\end{tabular}
\end{center}
\end{table*}

\section{Conclusions\label{sec:4}} 

In this work, we propose a nonparametric Bayesian approach for density modeling while enforcing constraints on the marginal distribution of a subset of components. Our approach, closely tied to conditional density modeling introduces a novel constrained Bayesian model based on a transformed Gaussian process, satisfying the marginal constraining distribution exactly and inducing large support prior. For posterior sampling, we devise an exact MCMC algorithm without any approximation/discretization errors, which is additional attraction of  our approach over existing conditional density modeling approaches.

In the present paper, we are only focused on placing one marginal density constraint on a subset of variables.
In some settings, partial prior beliefs are available about different subsets of variables, which requires simultaneously imposing multiple marginal constraints on those subsets of variables. 
Since the dependence between these subsets is unavailable, our approach doesn't extend to it in a straightforward manner. As mentioned at the start of this paper, a more general problem is to constrain some functional of the data distribution. For example, we might have prior information about the mean of the distribution, either in the form of fixed values or prior distribution. In future work, it is of interest to extend our framework to solve these problems. 

There are some other open issues to be considered. First, in this paper we have not discussed sufficient conditions for strong consistency and rates of the convergence of the posterior distribution. \citet{ghosal2000convergence} presents general results on the rates of convergence of the posterior distribution, which can be adapted to our specific transformed Gaussian process prior. Second, we can think about using our proposed transformed Gaussian process prior to solve nonparametric conditional density modeling problems like density regressions. \citet{tokdar2010bayesian} develops a framework for modeling conditional densities and offering dimension reduction of predictors by combining the logistic Gaussian process and subspace projection. A similar future work worth consideration is to connect our proposed transformed Gaussian process prior to subspace projection. More generally, it is of interest to leverage the vast literature on scalability of Gaussian processes to improve the scalability of our proposed model.

\section*{Acknowledgments}
We thank the Editor, Associate Editor and referees, as well as our financial sponsors.

\section*{Appendix}
\section{Appendix A: Weak Posterior Consistency}\label{appendix A}
\subsection{Basics of consistency}
Let $D_1,\dots, D_n$ be i.i.d. with true density $f_0$ belonging to a space of densities $\mathcal{F}$ with weak topology. Let $\Pi$  be a prior on $\mathcal{F}$. For a density $f$, let $P_{f}$ stand for the probability measure corresponding to $f$. Then for any measurable subset $A$ of $\mathcal{F}$, the posterior probability
of $A$ given $D_1, \dots, D_n$ is
$$\Pi(A|D_1, \dots, D_n)=\frac{\int_A \prod_{i=1}^n f(D_i)\Pi(df)}{\int_\mathcal{F}\prod_{i=1}^n f(D_i)\Pi(df)}$$
\begin{definition}[weak neighborhood]
A weak neighborhood of $f_0$ is a set 
$$V_\epsilon(f_0) =\left\{f\in \mathcal{F}: \left|\int\phi f-\int \phi f_0 \right|<\epsilon ,  \text{for any bounded and continuous function}\, \phi\right\}$$.

\end{definition}
\begin{definition}[weak posterior consistency]
  A prior $\Pi$ is said to achieve weak posterior consistency at $f_0$, if $$\Pi(U|X_1,\dots,X_n)\to 1\, \text{almost surely under}\, P_{f_0}$$ for all weak neighborhoods $U$ of $f_0$.
  
\end{definition}
\begin{definition}[KL support]
  $f_0$ is said to be in the KL support of $\Pi$ if $\forall \epsilon>0,\,\Pi\left(K_\epsilon(f_0)\right)>0$ , where $K_\epsilon(f_0) = \{f: \int f_0\log(f_0/f)<\epsilon\}$ is a KL neighborhood of $f_0$ .
\end{definition}
 
 
\subsection{A formal proof of weak posterior convergence for our proposed model}
\begin{lemma}
\label{lm:corresponding theorem 4.1}
For any two functions $\lambda_1$ and $\lambda_2$ on the index set $\mathcal{X}_A\times\mathcal{X}_{A^c}$ and for any $\delta > 0$,  $$\|\lambda_1-\lambda_2\|_\infty<\delta\Rightarrow \left\| \log\frac{f_{\lambda_1}}{f_{\lambda_2}}\right\|_\infty \le 2\delta$$
\end{lemma}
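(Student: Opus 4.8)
The plan is to write out $\log(f_{\lambda_1}/f_{\lambda_2})$ explicitly using the definition of the map in~\cref{eq:map} and then exploit the Lipschitz behaviour of $\log\sigma$. First I would observe that the marginal factor $p_A(X_A)$ and the centering density $\pi_0(X_{A^c}\mid X_A)$ appearing in the numerators cancel in the ratio, leaving
\begin{align*}
\log\frac{f_{\lambda_1}(X_A,X_{A^c})}{f_{\lambda_2}(X_A,X_{A^c})}
= \log\frac{\sigma(\lambda_1(X_A,X_{A^c}))}{\sigma(\lambda_2(X_A,X_{A^c}))}
+ \log\frac{\int_{\mathcal{X}_{A^c}}\pi_0(t\mid X_A)\,\sigma(\lambda_2(X_A,t))\,dt}{\int_{\mathcal{X}_{A^c}}\pi_0(t\mid X_A)\,\sigma(\lambda_1(X_A,t))\,dt}.
\end{align*}
The strategy is then to bound each of the two terms on the right by $\delta$ in absolute value, uniformly over $(X_A,X_{A^c})$.

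The key analytic fact is that $x\mapsto\log\sigma(x)$ is $1$-Lipschitz: its derivative equals $1-\sigma(x)\in(0,1)$. Hence $|\log\sigma(a)-\log\sigma(b)|\le|a-b|$ for all reals, and the hypothesis $\|\lambda_1-\lambda_2\|_\infty<\delta$ immediately controls the first term by $\delta$. More usefully, the same Lipschitz estimate yields the pointwise two-sided bound $e^{-\delta}\sigma(\lambda_2(X_A,t))<\sigma(\lambda_1(X_A,t))<e^{\delta}\sigma(\lambda_2(X_A,t))$ for every $t\in\mathcal{X}_{A^c}$.

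For the second term I would integrate this pointwise sandwich against the nonnegative weight $\pi_0(t\mid X_A)\,dt$. Since integration preserves the inequalities, the ratio of the two integrals also lies in $(e^{-\delta},e^{\delta})$, so the second term is likewise bounded by $\delta$ in absolute value. Adding the two bounds gives $\bigl|\log(f_{\lambda_1}/f_{\lambda_2})\bigr|\le 2\delta$ at every point of $\mathcal{X}_A\times\mathcal{X}_{A^c}$, and taking the supremum over the index set yields the claimed $\|\log(f_{\lambda_1}/f_{\lambda_2})\|_\infty\le 2\delta$.

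I do not anticipate a serious obstacle; the only points requiring care are verifying that $\log\sigma$ is \emph{globally} (not merely locally) $1$-Lipschitz, and confirming that the pointwise exponential sandwich survives integration. The latter is clean precisely because $\pi_0(\cdot\mid X_A)$ is a strictly positive integrable weight whose normalizing constant cancels in the ratio of integrals, so no positivity or integrability issue arises.
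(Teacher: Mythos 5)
Your proof is correct and follows essentially the same route as the paper: both arguments reduce to the pointwise multiplicative sandwich $e^{-\delta}\sigma(\lambda_2)<\sigma(\lambda_1)<e^{\delta}\sigma(\lambda_2)$, integrate it against $\pi_0(\cdot\mid X_A)$ to control the normalizing integrals, and combine the two factors of $e^{\delta}$ into the final $2\delta$. The only cosmetic difference is that you derive the sandwich from the global $1$-Lipschitzness of $\log\sigma$ (via $(\log\sigma)'=1-\sigma\in(0,1)$), whereas the paper states the equivalent inequality $\sigma(x-\delta)>e^{-\delta}\sigma(x)$ together with monotonicity of $\sigma$, and you express the conclusion as a sum of two log-terms rather than a product of two ratios.
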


\begin{proof}[\textbf{\upshape Proof:}] 
It follows from $\|\lambda_1-\lambda_2\|_\infty<\delta$ that $\lambda_2\left(X_A,X_{A^c}\right)-\delta < \lambda_1\left(X_A, X_{A^c}\right) <\lambda_2\left(X_A,X_{A^c}\right)+\delta$
for any $X_A, X_{A^c} \in\mathcal{X}_A\times\mathcal{X}_{A^c}$.
Then, from the monotonicity of the sigmoid function $\sigma$, we have that 
$\sigma\left(\lambda_2\left(X_A,X_{A^c}\right)-\delta\right)<\sigma\left(\lambda_1\left(X_A, X_{A^c}\right)\right)<\sigma\left(\lambda_2\left(X_A,X_{A^c}\right)+\delta\right)$.


Next, observe that the sigmoid function satisfies $\sigma(x-\delta) > \exp(-\delta)\sigma(x)$ for any $\delta > 0$. 
%
Combining this with the previous result, we obtain \begin{align*}\exp\left(-\delta\right)\sigma\left(\lambda_2\left(X_A,X_{A^c}\right)\right)<\sigma\left(\lambda_1\left(X_A,X_{A^c}\right)\right)<\exp\left(\delta\right)\sigma\left(\lambda_2\left(X_A,X_{A^c}\right)\right).\end{align*}
It follows that for any density $\pi_0(X_{A^c}|X_A)$, we have 
%
%
\begin{align*} 
 \exp\left(-\delta\right)\int_{\mathcal{X}_{A^c}}\pi_0\left(X_{A^c}|X_A\right)\sigma\left(\lambda_2\left(X_A,X_{A^c}\right)\right)dX_{A^c} & <\int_{\mathcal{X}_{A^c}}\pi_0\left(X_{A^c}|X_A\right)\sigma\left(\lambda_1\left(X_A,X_{A^c}\right)\right)dX_{A^c} \\ 
 & <\exp\left(\delta\right)\int_{\mathcal{X}_{A^c}}\pi_0\left(X_{A^c}|X_A\right)\sigma\left(\lambda_2\left(X_A,X_{A^c}\right)\right)dX_{A^c}
 \end{align*}
From the above two inequalities, it follows that
\begin{align*}
    \exp\left(-2\delta\right)\cfrac{\sigma\left(\lambda_2\left(X_A,X_{A^c}\right)\right)}{\int_{\mathcal{X}_{A^c}}\pi_0\left(X_{A^c}|X_A\right)\sigma\left(\lambda_2\left(X_A,X_{A^c}\right)\right)dX_{A^c}} 
   & < \cfrac{\sigma\left(\lambda_1\left(X_A,X_{A^c}\right)\right)}{\int_{\mathcal{X}_{A^c}}\pi_0\left(X_{A^c}|X_A\right)\sigma\left(\lambda_1\left(X_A,X_{A^c}\right)\right)dX_{A^c}} \\
& <\exp\left(2\delta\right)\cfrac{\sigma\left(\lambda_2\left(X_A,X_{A^c}\right)\right)}{\int_{\mathcal{X}_{A^c}}\pi_0\left(X_{A^c}|X_A\right)\sigma\left(\lambda_2\left(X_A,X_{A^c}\right)\right)dX_{A^c}},
\end{align*}
so that for all $(X_A,X_{A^c})$,
\begin{align*}
    \exp\left(-2\delta\right) < \frac{f_{\lambda_1}\left(X_A,X_{A^c}\right)}{f_{\lambda_2}\left(X_A,X_{A^c}\right)} <  \exp\left(2\delta\right) 
\end{align*}
The result then follows.
\end{proof}
\abc*
\begin{proof}[\textbf{\upshape Proof:}] 

By definition, any density that belongs to $\mathcal{F}$ takes the form $p_A(X_A)\cdot q(X_{A^c}|X_A)$.  
For any $\epsilon>0$, set $\delta=\cfrac{\epsilon}{4}$, and choose $\xi>0$ such that $\log(1+\xi)<\frac{\epsilon}{2}$.
%
%
Define a strictly positive density $q_0$ as 
$$
q_0\left(X_{A^c}|X_A\right)=\cfrac{q\left(X_{A^c}|X_A\right)+\xi}{1+\xi} \qquad\forall X_A\in\mathcal{X}_A, X_{A^c}\in \mathcal{X}_{A^c}.
$$
Consider the ratio $\cfrac{q_0\left(X_{A^c}|X_A\right)}{\pi_0\left(X_{A^c}|X_A\right)}$.
As both $q_0$ and $\pi_0$ are continuous functions on $\mathcal{X}_A\times\mathcal{X}_{A^c}$, and as $\pi_0$ does not vanish, this is also a continuous function on $\mathcal{X}_A\times\mathcal{X}_{A^c}$. 
Due to the compactness of $\mathcal{X}_A\times\mathcal{X}_{A^c}$, we additionally have that $M:=\sup_{\left\{X_A\in \mathcal{X}_A, X_{A^c} \in \mathcal{X}_{A^c}\right\}} \cfrac{q_0\left(X_{A^c}| X_A\right)}{\pi_0\left(X_{A^c}|X_A\right)}<\infty$. 
Define $\lambda_0(X_A,X_{A^c})=\sigma^{-1}(\frac{q_0(X_{A^c}| X_A)}{M\pi_0(X_{A^c}|X_A)})$.
The definition of $q_0$ and $M$, and the fact that $\pi_0$ does not vanish ensures that $\frac{q_0(X_{A^c}| X_A)}{M\pi_0(X_{A^c}|X_A)} \in (0,1)$, and thus lies in the domain of $\sigma^{-1}(\cdot)$.
Recalling the mapping $f_\lambda$ is defined in~\cref{eq:map}, it is easy to see that the function $\lambda_0$ satisfies
$f_{\lambda_0}(X_A,X_{A^c}) = p_A(X_A) q_0(X_{A'}|X_A)$. 



From the assumptions of the theorem, it follows from~\cref{lm:gaussian process property1} that 
$$P(\lambda:\sup_{\left\{X_A\in \mathcal{X}_A, X_{A^c} \in \mathcal{X}_{A^c}\right\}}|\lambda(X_A, X_{A^c})-\lambda_0(X_A,X_{A^c})|<\delta)>0.$$
From~\cref{lm:corresponding theorem 4.1}, we then obtain
\begin{align*}
&\Pi\left(f_\lambda:\left\|\log\cfrac{f_{\lambda_0}(X_A, X_{A^c})}{f_\lambda(X_A, X_{A^c})}\right\|_\infty<\cfrac{\epsilon}{2}\right)\geq P(\lambda:\sup_{\left\{X_A \in \mathcal{X}_A,X_{A^c}\in \mathcal{X}_{A^c}\right\}}|\lambda\left(X_A, X_{A^c}\right)-\lambda_0\left(X_A,X_{A^c}\right)|<\delta)>0.
\end{align*}
Now, recognizing that $\log\cfrac{q\left(X_{A^c}|X_A\right)}{q\left(X_{A^c}|X_A\right)+\xi} < 0$, we have:
\begin{align*}
\text{KL}\,&\left(p_A\left(X_A\right)\cdot q\left(X_{A^c}|X_A\right),f_\lambda\left(X_A, X_{A^c}\right)\right)=\int_{\mathcal{X}_A}\int_{\mathcal{X}_{A^c}}p_A\left(X_A\right)q\left(X_{A^c}|X_A\right)\log\cfrac{p_A\left(X_A\right)q\left(X_{A^c}|X_A\right)}{f_\lambda\left(X_A, X_{A^c}\right)}dX_{A^c}dX_A\\
&=\int_{\mathcal{X}_A}\int_{\mathcal{X}_{A^c}}P_A\left(X_A\right)q\left(X_{A^c}|X_A\right)\log\cfrac{q\left(X_{A^c}|X_A\right)}{q_0\left(X_{A^c}|X_A\right)}dX_{A^c}dX_A+\int_{\mathcal{X}_A}\int_{\mathcal{X}_{A^c}}p_A\left(X_A\right)q\left(X_{A^c}|X_A\right)\log\cfrac{p_A\left(X_A\right)q_0\left(X_{A^c}|X_A\right)}{f_\lambda\left(X_A, X_{A^c}\right)}dX_{A^c}dX_A\\
&=\int_{\mathcal{X}_A}\int_{\mathcal{X}_{A^c}}p_A\left(X_A\right)q\left(X_{A^c}|X_A\right)\log\cfrac{q\left(X_{A^c}|X_A\right)}{q\left(X_{A^c}|X_A\right)+\xi}dX_{A^c}dX_A+\log\left(1+\xi\right)\\
&\quad +\int_{\mathcal{X}_A}\int_{\mathcal{X}_{A^c}}p_A\left(X_A\right)q\left(X_{A^c}|X_A\right)\log\cfrac{p_A\left(X_A\right)q_0\left(X_{A^c}|X_A\right)}{f_\lambda\left(X_A, X_{A^c}\right)}dX_{A^c}dX_A\\
&< \frac{\epsilon}{2}+\left\|\log\cfrac{f_{\lambda_0}\left(X_A, X_{A^c}\right)}{f_\lambda\left(X_A, X_{A^c}\right)}\right\|_\infty.
\end{align*}
%
It follows that
$\Pi(f_\lambda:\text{KL}(p_A\cdot q,f_\lambda)<\epsilon)\geq \Pi\left(f_\lambda:\left\|\log\cfrac{f_{\lambda_0}(X_1, X_2)}{f_\lambda(X_1, X_2)}\right\|_\infty<\cfrac{\epsilon}{2}\right)>0$, completing the proof.
%
\end{proof}

\bibliographystyle{unsrtnat}  
\bibliography{trial}

\end{document}